\renewcommand\title[1]{\gdef\@title{\reset@font\Large\bfseries #1}}
\renewcommand\section{\@startsection {section}{1}{\z@}%
                                   {-3.5ex \@plus -1ex \@minus -.2ex}%
                                   {2.3ex \@plus.2ex}%
                                   {\normalfont\large\bfseries}}
\renewcommand\subsection{\@startsection{subsection}{2}{\z@}%
                                     {-3ex\@plus -1ex \@minus -.2ex}%
                                     {1.5ex \@plus .2ex}%
                                     {\normalfont\normalsize\bfseries}}
\renewcommand\subsubsection{\@startsection{subsubsection}{3}{\z@}%
                                     {-2.5ex\@plus -1ex \@minus -.2ex}%
                                     {1.5ex \@plus .2ex}%
                                     {\normalfont\normalsize\bfseries}}
\def\@runningauthor{}\newcommand{\runningauthor}[1]{\def\runningauthor{#1}}
\def\@runningtitle{}\newcommand{\runningtitle}[1]{\def\runningtitle{#1}}
\renewcommand{\ps@plain}{%
\renewcommand{\@evenhead}{\footnotesize\scshape \hfill\runningauthor\hfill}
\renewcommand{\@oddhead}{\footnotesize\scshape \hfill\runningtitle\hfill}}
\g@addto@macro\bfseries{\boldmath}
\font\msbm=msbm10 at 10pt
\newcommand{\NN}{\mbox{\msbm N}}
\newcommand{\FF}{\mbox{\msbm F}}
\def \N {{\NN}}
\def \F {{\FF}}
\def \x {{\bf x}}
\def \y {{\bf y}}
\theoremstyle{plain}
\newtheorem{theorem}{Theorem}
\newtheorem{lemma}[theorem]{Lemma}
\newtheorem{proposition}[theorem]{Proposition}
\newtheorem{fact}[theorem]{Fact}
\theoremstyle{definition}
\newtheorem{definition}[theorem]{Definition}
\theoremstyle{remark}
\newtheorem{remark}[theorem]{Remark}
\title{On Universally Good Flower Codes}
\runningtitle{On Universally Good Flower Codes}
\author{Krishna Gopal Benerjee 
\\
\small Laboratory of Natural Information Processing,\\[-0.8ex]
\small Dhirubhai Ambani Institute of Information and Communication Technology,  \\[-0.8ex] 
\small Gandhinagar, Gujarat, India\\ 
\small\tt krishna\_gopal@daiict.ac.in\\
\and
 Manish K Gupta 
 \\
 \small Laboratory of Natural Information Processing,\\[-0.8ex]
 \small Dhirubhai Ambani Institute of Information and Communication Technology, \\[-0.8ex] 
 \small Gandhinagar, Gujarat, India\\
 \small\tt mankg@computer.org
}
\date{}
\begin{document}

\maketitle

\thispagestyle{empty}

\begin{abstract}
For a Distributed Storage System (DSS), the \textit{Fractional Repetition} (FR) code is a class in which replicas of encoded data packets are stored on distributed chunk servers, where the encoding is done using the Maximum Distance Separable (MDS) code. 
The FR codes allow for exact uncoded repair with minimum repair bandwidth.
In this paper, FR codes are constructed using finite binary sequences. 
The condition for universally good FR codes is calculated on such sequences. 
For some sequences, the universally good FR codes are explored. 
\end{abstract}

\section{Introduction} 
The distributed storage is a well studied area, which deals with storing the data on distributed nodes in such a manner so that it allows data accessibility at anytime and anywhere. 
Many companies such as Microsoft and Google etc. provide such storage services by using distributed data centers.
In the storage systems like Google file system \cite{Ghemawat03},
multiple copies of data fragments are stored.
Thus, the system becomes reliable and the file can be retrieved from the system. 
At the same level of redundancy, coding techniques can improve the reliability for the storage system \cite{Weatherspoon:2002:ECV:646334.687814}. 

In Distributed Storage Systems (DSSs), a data file is encoded to a certain number of packets of the same size and those packets are stored on $n$ distinct nodes.
To retrieve the complete data file information from the DSS, a data collector has to download packets from any $k$ (called reconstruction degree) nodes. 
For a reliable system, a failed node has to be repaired in the DSS. 
For the repair, a failed node is replaced by a new active node with the same information.
For the new node, the packets are downloaded from each node (called helper node) of a set of $d$ (called repair degree) active nodes. 
The set can be called surviving set.
The repair is called exact (or functional), if the packets of the new node are the exact copy (or the function) of the packets of the failed node. 
For the repair, if $\beta$ packets are downloaded from a helper node of the surviving set then the total repair bandwidth will be $d\beta$.

For an $(n, k, d)$ DSS, the regenerating codes are given by the parameters  $\{[n, k, d], [\alpha,\beta, B] \}$, where the data file is broken into B information packets and a node contains $\alpha$ encoded packets \cite{5709963}. 
Optimizing both the parameters $\alpha$ and $\beta$ in different order, one can get two kinds of regenerating codes viz. Minimum Storage Regenerating (MSR) and Minimum Bandwidth Regenerating (MBR) codes. 
The MSR codes are useful for archival purpose and the MBR codes are useful for Internet applications. 
Some of the MBR codes studied by the researchers fail to optimize other parameters of the system such as disk I/O, computation and scalability etc. 
Towards this goal, the disk I/O is optimized for a class of MBR codes called Dress codes \cite{rr10,6062413,RSKR09,RSKR10}.
For a failed node, the exact copies of packets are downloaded from the helper nodes and placed at the new node.
The exact repair reduces the computation cost and is known as repair-by-transfer or table based repair or encoded repair. 
Dress codes are encoded by two-layer code, the inner code called fractional repetition (FR) code and the outer MDS code \cite{rr10,6033980}. 
The FR codes are constructed using graphs \cite{7118709}, combinatorial designs \cite{7118709,6912604,6763122,7004501,7066224,7422071,6483351}, difference families  \cite{e19100563,e18120441} and other combinatorial configurations  \cite{6810361,6033980,DBLP:journals/corr/abs-1208-2787}. 
FR codes have been studied in different directions such as Weak FR codes \cite{DBLP:journals/corr/abs-1302-3681,DBLP:journals/corr/abs-1303-6801}, Irregular FR codes \cite{6804948}, Variable FR codes \cite{6811237} General Fractional Repetition codes \cite{6763122}, Heterogeneous Fractional Repetition codes \cite{iet:/content/journals/10.1049/iet-com.2014.1225}, Locally Repairable  Fractional Repetition codes \cite{7458387,7558231}, Adaptive Fractional Repetition codes \cite{7312417}, Scalable Fractional Repetition codes \cite{6120326} and others \cite{7366761}. 
For a given FR code, algorithms for repair degree and reconstruction degree are given in \cite{DBLP:journals/corr/abs-1305-4580,DBLP:journals/corr/PrajapatiG16}. 
A ring construction of FR codes was described in \cite{DBLP:journals/corr/abs-1302-3681} (for $\rho=2$) and \cite{7458383} (for $\rho\geq2$), where nodes are placed on a circle and packets are dropped on nodes in a specific manner. 

\textit{Contribution}: 
FR codes with non-uniform parameters are more close to the real world. 
Motivated by \cite{DBLP:journals/corr/abs-1302-3681,7458383}, in this paper, FR codes with non-uniform parameters, are constructed using finite binary sequences.   
For such sequences, we have calculated parameters and existing conditions of the corresponding FR codes. 
In general, for a universally good FR code, a bound on such sequences is calculated. 
We have obtained universally good FR code for periodic binary sequences.

\textit{Organization}: 
 The preliminaries for sequences, FR codes and Flower codes are collected in Section \ref{Section Preliminaries}. 
 The existence of Flower code and condition for universally good Flower code on sequences are investigated in Section \ref{Section Results}. 
 Section \ref{Section Conclusions} concludes the paper followed by references. The appendices contains the proofs. 
\section{Preliminaries}\label{Section Preliminaries}
In this section, we have listed the basic definition and properties for sequences, FR codes and Flower codes.
\subsection{Sequences}
A \textit{sequence} is a one dimensional array defined on a set of $q$ symbols, called alphabet. 
If the length of the array is finite then the sequence is called a \textit{finite sequence}. 
If the alphabet is $\{0,1\}$ then the sequence is called a \textit{binary sequence}. 
A finite binary sequence of length $\ell\ (\in\N)$ is denoted by $\x=(x_1\ x_2\ldots x_\ell)$, where $x_i\in\{0,1\}$ for $i=1,2,\ldots,\ell$. 
The \textit{weight} of a finite binary sequence $\x$ is the sum of all the terms of the sequence and denoted by $w_\x$. 
The \textit{weight} of the first $s$ terms of a binary sequence $\x$ is the sum of the first $s$ terms of the sequence and denoted by $w_\x(s)$. 
For example, if $\x=101101$ then $w_\x=4$ and $w_\x(3)=2$.
A sequence $\x$ of length $\ell$ is called \textit{periodic sequence}, if there exists some positive integer $\tau$ such that $\x_r=\x_{r+\tau}$, for $r=1,2,\ldots,\ell-\tau$. 
For the periodic sequence, the parameter $\tau$ is called \textit{period}.
For example, the sequence $\x=1011 1011 1011$ is a periodic sequence with the period $\tau=4$. 
For two sequences $\x$ and $\y$, the concatenation is denoted $\x\y$.
For a positive integer $s$, the sequence $\x^s$ denotes the concatenation of $s$ copies of the sequence $\x$. 
For example, $(10)^2(1011)^30^4=10 10 1011 1011 1011 0000$.

\subsection{Fractional Repetition Code}
Let a data file be broken into $M$ packets of the same length, where the symbols are defined on a field $\F_q$ with $q$ symbols. 
Using $(\theta, M)$ MDS code, the packets are encoded into $\theta$ distinct packets and denote the encoded packets by $P_j$ ($j=1,2,\ldots,\theta$).
The packet $P_j$ is replicated $\rho_j$ times and all the $\sum_{j=1}^\theta \rho_j$ replicated packets are distributed on $n$ nodes.
The nodes are denoted by $U_i$ for $i=1,2,\ldots,n$.  
For $i=1,2,\ldots,n$, let $\alpha_i$ packets be stored in node $U_i$. 
The parameter $\rho_j$ is called the replication factor of the packet $P_j$ and the parameter $\alpha_i$ is called storage capacity of node $U_i$.
Formally, one can define an FR code as following.

\begin{definition} (Fractional Repetition Code): 
A Fractional Repetition (FR) code $\mathscr{C}:(n, \theta, \alpha, \rho)$ with $n$ nodes and $\theta$ packets is a collection of $n$ subsets $U_1,\ U_2, \ldots, U_n$ of a set $\{P_j:j=1,2,\ldots\theta\}$ such that 
\begin{itemize}
    \item $|U_i|=\alpha_i$ for each $i=1,2,\ldots,n$ and 
    \item subsets contain $\rho_j$ copies of $P_j$ for each $j=1,2,\ldots,\theta$, 
\end{itemize}
 where $\rho = \max\{\rho_j:j=1,2,\ldots,\theta\}$ and $\alpha=\max\{\alpha_i:i=1,2,\ldots,n\}$. 
 Clearly, $\sum_{j=1}^{\theta}\rho_j = \sum_{i=1}^n\alpha_i$.
\label{FR definition}
\end{definition}
An example of an FR code $\mathscr{C}: (4, 5, 3, 2)$ is illustrated in the Table \ref{FR example}, where $U_1=\{P_1,P_2,P_3\}$, $U_2=\{P_1,P_4,P_5\}$, $U_3=\{P_2,P_4\}$ and $U_4=\{P_3,P_5\}$. 
For the FR code $\mathscr{C}: (4, 5, 3, 2)$, $\alpha_1=\alpha_2=3$, $\alpha_3=\alpha_4=2$ and $\rho_j=2$ for each $j=1,2,3,4,5$. 
\begin{table}[ht]
\caption{Packet Distribution for the FR code $\mathscr{C}: (4, 5, 3, 2)$.}
\centering 
\begin{tabular}{|c||c|}
\hline
\textbf{Node} &\textbf{Packet distribution}        \\[0.5ex]
\hline\hline
$U_1$& $P_1,\ P_2,\ P_3$  \\ 
\hline
$U_2$& $P_1,\ P_4,\ P_5$  \\ 
\hline
$U_3$& $P_2,\ P_4$        \\ 
\hline
$U_4$& $P_3,\ P_5$        \\ 
\hline
\end{tabular}
\label{FR example}
\end{table}

For reliability, if a node $U_i$ ($i=1,2,\ldots,n$) fails in an FR code then the node will be repaired by replacing the failed node with a new node and the packets of the new node are downloaded from some active nodes. 
The active nodes are called \textit{helper nodes} and the set of those helper nodes is called \textit{surviving set}. 
The cardinality of the surviving set ($i.e.$ the number of active nodes from which packets are downloaded during the repair) is called \textit{repair degree} $d_i$ of the node $U_i$ for $i=1,2,\ldots,n$. 
For an FR code, the maximum repair degree $d=\max\{d_i:i=1,2,\dots,n\}$. 
To repair a node of an FR code, the total number of distinct packets which are communicated, is called \textit{repair bandwidth}. 
For example, if node $U_1$ of the FR code $\mathscr{C}: (4, 5, 3, 2)$ (Table \ref{FR example}) fails then it will be replaced with new node $U_1'=\{P_1,P_2,P_3\}$, where the packets $P_1$, $P_2$ and $P_3$ of the node $U_1'$ are downloaded from the nodes $U_2$, $U_3$ and $U_4$ respectively. 
In this example, the repair degree $d_1=3$ and the repair bandwidth is $3$ unit.

For an FR code $\mathscr{C}:(n, \theta, \alpha, \rho)$, if a data collector connects any $k<n$ nodes and gets $M$ packets out of the $\theta$ packets then the parameter $k$ is called \textit{reconstruction degree}. 
Because of ($\theta,M$) MDS code, note that the stored file information can be extracted from any $M$ distinct encoded packets of the $\theta$ packets. 
If a set of any $k$ nodes has at least $M$ distinct packets then a data collector can reconstruct the complete file by connecting any $k$ nodes.
Hence, the file size $M\leq \min\left\{\left\vert\bigcup_{i\in I}U_i\right\vert:|I|=k,I\subset\{1,2,\ldots,n\}\right\}$, where at least $\min\left\{\left\vert\bigcup_{i\in I}U_i\right\vert:|I|=k,I\subset\{1,2,\ldots,n\}\right\}$ encoded distinct packets are stored in any $k$ nodes. 
Since one always looks for maximum file size that can be stored in the FR code, we assume that $M=\min\left\{\left\vert\bigcup_{i\in I}U_i\right\vert:|I|=k,I\subset\{1,2,\ldots,n\}\right\}$. 
Note, that the same FR code can be used to store data, with different $k$'s as reconstruction degrees, and different MDS codes. 
The file size $M$, which is the dimension of the chosen MDS code, depends on the value of chosen $k$ and hence we will use $M(k)$ to denote the maximum file size $i.e.$
$M(k)=\min\left\{\left\vert\bigcup_{i\in I}U_i\right\vert:|I|=k,I\subset\{1,2,\ldots,n\}\right\}$. 

Consider an FR code $\mathscr{C}:(n,\theta,\alpha,\rho)$ with $\alpha_i=\alpha$ for each $i=1,2,\ldots,n$ and $\rho_j=\rho$ for each $j=1,2,\ldots,\theta$. 
The FR code is called \textit{universally good} \cite{rr10,7118709,8309370} if
\begin{equation}
M(k)\geq k\alpha-\binom{k}{2},
\label{file size salim}
\end{equation}
for any $k\leq\alpha$, where the right hand side of the Inequality (\ref{file size salim}) is the maximum file size that can be stored using an MBR code, $i.e.$, the MBR capacity \cite{5550492}. 
Consider an FR code $\mathscr{C}:(n, \theta, \alpha, \rho)$ in which any two nodes do not share more then one packet. 
For the FR code with the reconstruction degree $k$, the maximum file size is
\begin{equation}
M(k)\geq \sum_{i=1}^k\alpha_i-\binom{k}{2},
\label{generalized condition}
\end{equation}
where $\alpha_1\leq \alpha_2\leq\ldots\leq \alpha_n$ and the node storage capacity of the node $U_i$ is $\alpha_i$ for $i=1,2,\ldots,n$.
In \cite[Theorem 4]{6763122}, the Inequality (\ref{generalized condition}) is proved for FR codes with identical replication factor, where any two nodes share at most one packet. 
The proof for the Inequality (\ref{generalized condition}), for an FR code with $U_i\cap U_j\leq1$ ($i, j\in\{1,2,\ldots,n\}$ and $i\neq j$), is similar to the proof of \cite[Theorem 4]{6763122}.

Consider two FR codes $\mathscr{C}:(n,\theta,\alpha,\rho)$ and $\mathscr{C}^*:(n^*, \theta^*, \alpha^*, \rho^*)$ such that, in the FR code $\mathscr{C}$, the packet $P_j$ is stored on the node $U_i$ if and only if, in the FR code $\mathscr{C}^*$, the packet $P^*_i$ is stored on the node $U^*_j$, where $n^*=\theta$, $\theta^*=n$, $\alpha_j^*=\rho_j$ and $\rho_i^*=\alpha_j$ for $i=1,2,\ldots,n$ and $j=1,2,\ldots,\theta$.
Essentially, the roles of packets and nodes are exchanged between $\mathscr{C}$ and $\mathscr{C}^*$. 
Hence, the FR code $\mathscr{C}^*$ is called the dual of the FR code $\mathscr{C}$.
In \cite{8277971}, the authors studied the dual FR code. 
A formal definition of dual FR code is following. 

\begin{definition}
Let $\mathscr{C}:(n, \theta, \alpha, \rho)$ be an FR code. 
A \textit{dual} $\mathscr{C}^*:(n^*, \theta^*, \alpha^*, \rho^*)$ of the FR code $\mathscr{C}$ is an FR code such that the node $U^*_j=\{f(U_i)=P^*_i: P_j\in U_i \}$, and the bijection $f$ is defined between the set of packets $\{P^*_1,P^*_2,\ldots,P^*_{\theta^*}\}$ and the set of nodes $\{U_1,U_2,\ldots,U_n\}$. 
The parameters of the dual FR code are $n^*=\theta$, $\theta^*=n$, $\alpha_j^*=\rho_j$ and $\rho_i^*=\alpha_i$, where $i=1,2,\ldots,n$ and $j=1,2,\ldots,\theta$.
\label{dual FR code}
\end{definition}
For the FR code $\mathscr{C}:(4, 5, 3, 2)$ (as given in the Table \ref{FR example}), the dual FR code $\mathscr{C}^*:(5,4,2,3)$ is $U_1^*=\{P_1^*,P_2^*\}$, $U_2^*=\{P_1^*,P_3^*\}$, $U_3^*=\{P_1^*,P_4^*\}$, $U_4^*=\{P_2^*,P_3^*\}$ and $U_5^*=\{P_2^*,P_4^*\}$.

For any FR code $\mathscr{C}$, if $U_i\cap U_j\leq1$ ($i,j\in\{1,2,\ldots,n\}$ and $i\neq j$) then, by duality, $U^*_i\cap U^*_j\leq1$ ($i,j\in\{1,2,\ldots,n^*\}$ and $i\neq j$).
Therefore, the dual of an universally good FR code is universally good. 
Hence, the following theorem gives the connection between universally good condition on FR code and its dual code.
\begin{theorem}
The dual of a universally good FR code is also universally good.
\label{U G dual}
\end{theorem}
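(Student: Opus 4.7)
The plan is to reduce universal goodness to the purely combinatorial condition $|U_i\cap U_j|\leq 1$ for distinct $i,j$, and to exploit that this condition is self-dual under the node/packet exchange of Definition \ref{dual FR code}. First, I would observe that if $\mathscr{C}$ has uniform parameters $(\alpha,\rho)$, then by Definition \ref{dual FR code} its dual $\mathscr{C}^*$ is also uniform, with $\alpha^*=\rho$ and $\rho^*=\alpha$, so that ``universally good'' even makes sense for $\mathscr{C}^*$. Next, any universally good FR code necessarily satisfies $|U_i\cap U_j|\leq 1$: assuming $\alpha\geq 2$, the defining inequality (\ref{file size salim}) at $k=2$ gives $M(2)\geq 2\alpha-1$, while $M(2)\leq |U_i\cup U_j|=2\alpha-|U_i\cap U_j|$ for every pair, forcing $|U_i\cap U_j|\leq 1$ (the case $\alpha=1$ is immediate).

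The key step is to transport this intersection bound to $\mathscr{C}^*$. I would view the FR code as a bipartite incidence graph $G$ with node-vertices on one side and packet-vertices on the other, with an edge $U_i\sim P_j$ whenever $P_j\in U_i$. Then $|U_i\cap U_{i'}|$ counts common neighbors of $U_i,U_{i'}$ in $G$, while $|U_j^*\cap U_{j'}^*|$ counts common neighbors of $P_j,P_{j'}$; in both cases, the bound $\leq 1$ for every pair is equivalent to $G$ being $C_4$-free, a property symmetric in the two sides of $G$. Hence $\mathscr{C}^*$ also satisfies $|U_j^*\cap U_{j'}^*|\leq 1$ for all distinct $j,j'$, and the generalized bound (\ref{generalized condition}) applied to the uniform code $\mathscr{C}^*$ yields $M^*(k)\geq k\alpha^*-\binom{k}{2}$ for every $k\leq\alpha^*$, which is exactly the universally good condition for $\mathscr{C}^*$.

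The main obstacle is really only the self-duality step: it becomes a short combinatorial remark once one passes to the bipartite-graph formulation, but without that viewpoint it is tempting to conflate $|U_i\cap U_{i'}|$ (a row inner product in the incidence matrix) with $|U_j^*\cap U_{j'}^*|$ (a column inner product), whereas the actual invariant shared by the two sides of $G$ is the absence of $4$-cycles. Once this symmetry is made explicit, the rest of the proof is just substitution into the inequalities already collected in Section \ref{Section Preliminaries}.
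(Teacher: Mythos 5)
Your proof is correct and follows essentially the same route as the paper: both arguments reduce universal goodness to the pairwise condition $|U_i\cap U_j|\leq 1$ and observe that this condition is self-dual under the node/packet exchange (the paper phrases your $C_4$-freeness as ``two packets cannot both be stored in two distinct nodes''). Your version is somewhat more careful, since you explicitly derive the intersection bound from the $k=2$ case of Inequality (\ref{file size salim}) and explicitly recover the universally good inequality for the dual from Inequality (\ref{generalized condition}), two steps the paper leaves implicit in the discussion preceding the theorem.
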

\subsubsection{FR code constructions using sequences}
A ring construction of FR codes for $\rho=2$ is given in \cite{DBLP:journals/corr/abs-1302-3681}.
In \cite{7458383}, the ring construction is generalized for $\rho\geq2$. 
The authors constructed FR codes by dropping packets on nodes in specific order. 
For that first arrange all nodes on a circle and drop one copy of all $\theta$ packets on the nodes one by one.
The dropping of the $\theta$ distinct packets no nodes are called a cycle \cite{7458383}.
After $\rho$ number of cycles, it reises an FR code on $n$ nodes and $\theta$ packets where all the $\rho$ copies of each packet are stored in the system. 
In this construction, the dropping of packets are controlled by a finite binary sequence. 
For a positive integer $p\leq\ell$ and a finite binary sequence $\x$ of length $\ell$, if $x_p=1$ then the selected packet will be dropped on the node and if $x_p=0$ then the selected packet will not be dropped on the node. 
For example, consider $n=4$ and $\theta=6$ and the sequence $\x=1011010100100100100010100101$. 
Now, first drop the packet $P_1$ on the node $U_1$ because $x_1=1$ and then drop the packet $P_2$ on node $U_3$ since $x_2=0$ and $x_3=1$ and so on.
Note that there are two cycles for the binary sequence $\x$ for $n=4$ and $\theta=6$.
In Figure \ref{example flower figure old}, the FR code construction is illustrated for the binary sequence $\x$ with $n=4$ and $\theta=6$. 
In the example (Figure \ref{example flower figure old}), there are two cycles and packets dropped for Cycle $1$ and $2$ which are coloured by red and blue respectively. 

\tikzstyle{vertex} = [fill,shape=circle,node distance=60pt]
\tikzstyle{edge} = [fill,opacity=.5,fill opacity=.3,line cap=round, line join=round, line width=30pt]
\tikzstyle{line} = [draw, -]
\tikzstyle{cloud} = [draw, circle,fill=red!20, node distance=2cm, minimum height=2em]
\tikzstyle{block 1} = [rectangle, draw, fill=red!20, text width=6em, text centered, rounded corners, minimum height=1em, node distance=1.7cm]
\tikzstyle{block 2} = [rectangle, draw, fill=red!20, text width=6em, text centered, rounded corners, minimum height=1em, node distance=0.7cm]
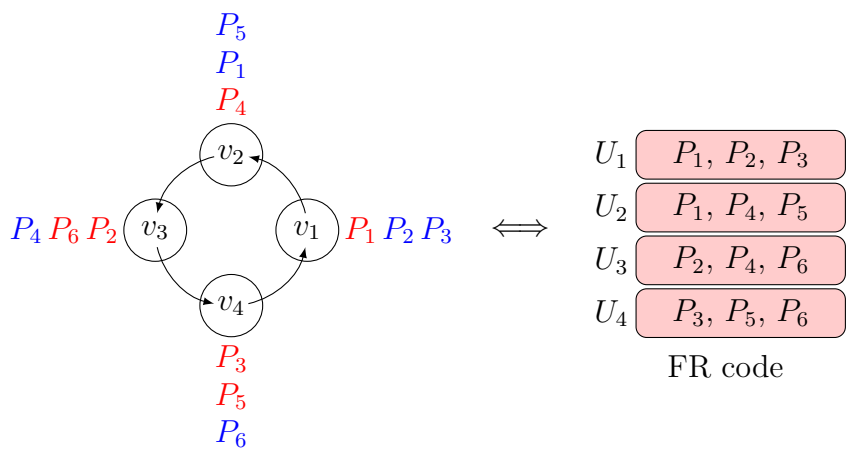
\begin{figure}
	\centering
	\begin{tikzpicture}[node distance = 2cm, auto]
	\def \n {4}
	\def \radius {1cm}
	\def \margin {13} 
	
	\foreach \s in {1,...,\n}
	{
		\node[draw, circle] at ({360/\n * (\s - 1)}:\radius) {$v_{\s}$};
		\draw[->, >=latex] ({360/\n * (\s - 1)+\margin}:\radius) 
		arc ({360/\n * (\s - 1)+\margin}:{360/\n * (\s)-\margin}:\radius);
	}
	
	\draw [red] (1.7,0) node {$P_1$};
	\draw [blue] (2.2,0) node {$P_2$};
	\draw [blue] (2.7,0) node {$P_3$};
	
	\draw [red] (0,1.7) node {$P_4$};
	\draw [blue] (0,2.2) node {$P_1$};
	\draw [blue] (0,2.7) node {$P_5$};
	
	\draw [red] (-1.7,0) node {$P_2$};
	\draw [red] (-2.2,0) node {$P_6$};
	\draw [blue] (-2.7,0) node {$P_4$};
	
	\draw [red] (0,-1.7) node {$P_3$};
	\draw [red] (0,-2.2) node {$P_5$};
	\draw [blue] (0,-2.7) node {$P_6$};
	
	\draw (3.8,0) node {$\Longleftrightarrow$};
	\node[label=right:\(\)]  (u1) at (5,1) {};
	\node [block 1, right of=u1] (u3) {$P_1$, $P_2$, $P_3$};
	\node [block 2, below of=u3] (u4) {$P_1$, $P_4$, $P_5$};
	\node [block 2, below of=u4] (u5) {$P_2$, $P_4$, $P_6$};
	\node [block 2, below of=u5] (u6) {$P_3$, $P_5$, $P_6$};
	\draw (5,1) node {$U_1$};
	\draw (5,0.3) node {$U_2$};
	\draw (5,-0.4) node {$U_3$};
	\draw (5,-1.1) node {$U_4$};
	\draw (6.5,-1.8) node {FR code };
	\end{tikzpicture}
	\caption{
		FR code construction for $n=4$, $\theta=6$, and $\x=1011010100100100100010100101$.
	}
	\label{example flower figure old}
\end{figure}
\section{Results}\label{Section Results}
In this section, a general approach for the construction of an FR code using binary sequences is discussed. 
The properties and universally good constraints for Flower codes are also discussed. 

\subsection{Flower Code: A General Approach}
Motivated by the ring construction of FR codes as given in \cite{7458383,DBLP:journals/corr/abs-1302-3681}, an FR code $\mathscr{C}: (n, \theta, \alpha , \rho)$ with different replication factor can be constructed by dropping $\theta$ packets on $n$ nodes one by one such that an arbitrary packet $P_j$ ($j=1,2,\ldots,\theta$) is replicated $\rho_j$ times in the system and the size of node $U_i$ ($i=1,2,\ldots,n$) is $\alpha_i$. 
To construct the FR code, first place all $n$ nodes on a circle and then drop all replicated packets on those nodes one by one. 
One has to place the packets till all the replicated packets are consumed. 
This will give rise to an FR code since all replicas of each packet are in the system. 
For example, consider $n=4$ distinct nodes $U_1$, $U_2$, $U_3$ and $U_4$, and $\theta=4$ distinct packets $P_1$, $P_2$, $P_3$ and $P_4$. 
The replication factor of those packets are $\rho_1=3,\rho_2=2,\rho_3=2$ and $\rho_4=2$. 
Let $\{P_2,P_1,P_1,P_3,P_4,P_2,P_4,P_1,P_3\}$ be a collection of all the replicas of all four packets. 
A Flower code can be constructed by selecting packets one by one from the collection and dropping it on nodes one by one, where all the $4$ nodes are placed on a circle (see Figure \ref{example figure}).
\begin{definition}
A system on ($n,k$)-DSS is called a Flower code in which packets selected from a collection are distributed among $n$ nodes arranged on a circle, where the collection contains all the replicas of each packet.
\end{definition}

\tikzstyle{vertex} = [fill,shape=circle,node distance=60pt]
\tikzstyle{edge} = [fill,opacity=.5,fill opacity=.3,line cap=round, line join=round, line width=30pt]
\tikzstyle{line} = [draw, -]
\tikzstyle{cloud} = [draw, circle,fill=red!20, node distance=2cm, minimum height=2em]
\tikzstyle{block 1} = [rectangle, draw, fill=red!20, text width=6em, text centered, rounded corners, minimum height=1em, node distance=1.7cm]
\tikzstyle{block 2} = [rectangle, draw, fill=red!20, text width=6em, text centered, rounded corners, minimum height=1em, node distance=0.7cm]
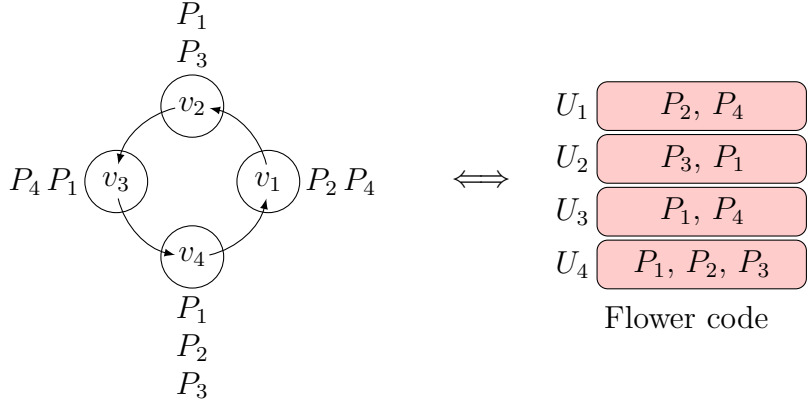
\begin{figure}
	\centering
	\begin{tikzpicture}[node distance = 2cm, auto]
	\def \n {4}
	\def \radius {1cm}
	\def \margin {13} 

	
	
	
	\foreach \s in {1,...,\n}
	{
		\node[draw, circle] at ({360/\n * (\s - 1)}:\radius) {$v_{\s}$};
		\draw[->, >=latex] ({360/\n * (\s - 1)+\margin}:\radius) 
		arc ({360/\n * (\s - 1)+\margin}:{360/\n * (\s)-\margin}:\radius);
	}
	
	\draw (1.7,0) node {$P_2$};
	\draw (2.2,0) node {$P_4$};
	\draw (0,1.7) node {$P_3$};
	\draw (0,2.2) node {$P_1$};
	\draw (-1.7,0) node {$P_1$};
	\draw (-2.2,0) node {$P_4$};
	\draw (0,-1.7) node {$P_1$};
	\draw (0,-2.2) node {$P_2$};
	\draw (0,-2.7) node {$P_3$};
	
	
	
	\draw (3.8,0) node {$\Longleftrightarrow$};
	\node[label=right:\(\)]  (u1) at (5,1) {};
	\node [block 1, right of=u1] (u3) {$P_2$, $P_4$};
	\node [block 2, below of=u3] (u4) {$P_3$, $P_1$};
	\node [block 2, below of=u4] (u5) {$P_1$, $P_4$};
	\node [block 2, below of=u5] (u6) {$P_1$, $P_2$, $P_3$};
	\draw (5,1) node {$U_1$};
	\draw (5,0.3) node {$U_2$};
	\draw (5,-0.4) node {$U_3$};
	\draw (5,-1.1) node {$U_4$};
	\draw (6.5,-1.8) node {Flower code };
	\end{tikzpicture}
	\caption{
		An example of a Flower code.
	}
	\label{example figure}
\end{figure}
To construct a Flower code, one can select the copies of coded packets randomly from the collection of replicated packets and he can place it one by one on nodes using some model such as the balls-and-bins model. 
For the system, one can calculate the probability of receiving complete file by user and the probability of data recovery when a node has failed.
On the other hand, the selection of packets from the collection and dropping the selected packet on nodes can be done using finite binary characteristic sequences.
More precisely, for finite binary characteristic sequences $\y$ (of length $t$) and $\x$ (of length $\ell$), if $y_r=1$ ($r\in\{1,2,\ldots,t\}$) then the associated packet is selected to be dropped on a node, and if $x_m=1$ ($m\in\{1,2,\ldots,\ell\}$) then the selected packet is dropped on the associated node. 
For a Flower code, both sequences are called the \textit{packet selection sequence} and the \textit{packet dropping sequence} respectively. 
For the terms $y_j=1$ and $x_m=1$, one can select and drop a packet on a node in various methods. 
In this paper, we consider the following:
\begin{itemize}
    \item if the term $y_r=1$ then the packet for which the packet-index is congruent to $r\ (mod\ \theta)$ is selected to be dropped on a node, and
    \item if the term $x_m=1$ then the selected packet is dropped on the node for which the node-index is congruent to $m\ (mod\ n)$. 
\end{itemize}  
For example, Flower code as considered in the Figure \ref{example figure} can be constructed using the two binary sequences $\y=0100100010110101101$ and $\x=101101111101$. 
In the rest of the paper, for a flower code with $n$ nodes and $\theta$ packets, the packet selection sequence of length $t$ is denoted by $\y$ and the packet dropping sequence of length $\ell$ is denoted by $\x$.

The necessary conditions for the existence of a Flower code for any two binary sequences are following.
\begin{lemma}
For $n$ nodes, $\theta$ packets, and binary sequences $\y$ and $\x$, a Flower code exists, if 
\begin{enumerate}
    \item\label{1} the weight of both sequences are same $i.e.$ $w_\x=w_\y$, 
    \item\label{2} $w_\x\geq\theta$ and $w_\y\geq n$.
\end{enumerate}
\label{Gen Flower existence}
\end{lemma}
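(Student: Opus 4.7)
The strategy is to explicitly run the Flower code construction from the preceding subsection on the given sequences $\x$ and $\y$, and to verify that under the two hypotheses the output satisfies Definition~\ref{FR definition}.

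The first step establishes a bijective pairing of selection events with drop events. Let $r_1 < r_2 < \cdots < r_{w_\y}$ denote the positions of the $1$'s in $\y$ and $m_1 < m_2 < \cdots < m_{w_\x}$ the positions of the $1$'s in $\x$. Condition~(\ref{1}), $w_\x = w_\y$, is precisely what makes the $s$-th selection event correspond unambiguously to the $s$-th drop event for every index $s$. No selected packet is left without a drop, and no drop occurs with no pending packet, so the construction never stalls or overflows; this matching step would fail if either sequence contained more $1$'s than the other.

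Next, I assemble the FR structure. For each $s$, the rules place packet $P_{j_s}$ with $j_s\equiv r_s\pmod{\theta}$ onto node $U_{i_s}$ with $i_s\equiv m_s\pmod{n}$. Setting $U_i:=\{P_{j_s}:i_s=i\}$ yields $n$ subsets of $\{P_1,\ldots,P_\theta\}$ with storage capacities $\alpha_i=|U_i|$ and replication factors $\rho_j=|\{s:j_s=j\}|$ satisfying $\sum_i\alpha_i=\sum_j\rho_j=w_\x=w_\y$. Condition~(\ref{2}) supplies the two lower bounds: $w_\x\geq\theta$ forces the total storage to be large enough for every packet to receive at least one copy, and $w_\y\geq n$ forces it to be large enough for every node to receive at least one packet. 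Together these certify that the constructed object qualifies as a Flower code $\mathscr{C}:(n,\theta,\alpha,\rho)$ in the sense of Definition~\ref{FR definition}.

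The main obstacle is the coverage argument just alluded to: the weight conditions give the correct aggregate counts, but they do not by themselves guarantee that the modular index assignments $j_s\equiv r_s\pmod{\theta}$ and $i_s\equiv m_s\pmod{n}$ hit every residue class in $\{1,\ldots,\theta\}$ and $\{1,\ldots,n\}$. My plan is to invoke a pigeonhole-style argument together with the circular nature of the placement rule, using $w_\x=w_\y\geq\max\{n,\theta\}$ to distribute the aggregate mass across all residues. This will confirm that every $U_i$ is non-empty and every $\rho_j\geq 1$, completing the verification that the construction produces a bona fide Flower code and thereby establishing existence.
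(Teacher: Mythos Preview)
Your proposal proves the wrong direction. Despite the bare word ``if'' in the lemma, the paper treats conditions~(\ref{1}) and~(\ref{2}) as \emph{necessary} for the existence of a Flower code: the sentence immediately preceding the lemma reads ``The necessary conditions for the existence of a Flower code for any two binary sequences are following,'' and the paper's proof argues accordingly. It interprets $w_\y$ as the total number of packet replicas placed in the system and $w_\x$ as the total storage consumed across all nodes; since these two totals must coincide in any realized system, condition~(\ref{1}) follows. Condition~(\ref{2}) is then forced because a system with $\theta$ genuine packets carries at least $\theta$ replicas, and one with $n$ genuine nodes has total storage at least $n$.

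You instead attempt \emph{sufficiency}, and you correctly flag the obstacle yourself: the aggregate bounds $w_\x\geq\theta$ and $w_\y\geq n$ do not force the residues $r_s\pmod\theta$ and $m_s\pmod n$ to cover every class. No pigeonhole argument will close this gap, because sufficiency is simply false. For example, take $n=2$, $\theta=3$, $\y=100100100$ and $\x=111$: then $w_\x=w_\y=3\geq\max\{n,\theta\}$, yet every $1$ in $\y$ lies at a position congruent to $1\pmod 3$, so only $P_1$ is ever selected and $P_2,P_3$ receive no copies. This is exactly why, right after Theorem~\ref{Gen Flower parameters}, the paper records propositions allowing $\rho_j=0$ and $\alpha_i=0$. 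To align with the paper, reverse your logic: assume a Flower code has been produced from $\x$ and $\y$ and deduce conditions~(\ref{1}) and~(\ref{2}) from the construction.
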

From the construction of Flower code, one can observe the following Fact.
\begin{fact}
For some positive integers $m\leq\ell$ and $r\leq t$, consider a Flower code with $n$ nodes, $\theta$ packets and two characteristic binary sequences $\y$ and $\x$.
For $i\in\{1,2,\ldots,n\}$ and $j\in\{1,2,\ldots,\theta\}$, the packet $P_j$ is dropped on the node $U_i$ if and only if $x_m=1$, $y_r=1$ and $w_\x(m)=w_\y(r)$, where $i-m\equiv0\ (mod\ n)$ and $j-r\equiv0\ (mod\ \theta)$.
\label{node packets indexes}
\end{fact}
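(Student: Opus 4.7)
The plan is to view the Flower-code construction as a sequence of synchronized ``drop steps'' indexed by a counter $s = 1, 2, \ldots, w_\x$; at step $s$ one copy of one packet is placed on one node, and the identities of that packet and node are forced by the positions of the $s$-th $1$ in $\y$ and $\x$ respectively. Under this lens, Fact \ref{node packets indexes} becomes a careful indexing statement: $P_j$ lands on $U_i$ precisely when some common value of the counter matches the cumulative weights $w_\x(m)$ and $w_\y(r)$ at a packet-drop position $m$ and a packet-selection position $r$.

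First, for each $s \in \{1, 2, \ldots, w_\x\}$ I would define $m(s)$ as the unique index satisfying $x_{m(s)} = 1$ and $w_\x(m(s)) = s$, and $r(s)$ as the unique index with $y_{r(s)} = 1$ and $w_\y(r(s)) = s$; both are well-defined because $w_\x = w_\y$ by Lemma \ref{Gen Flower existence}. Applying the two rules stated just before Fact \ref{node packets indexes}, at step $s$ the selected packet is the one whose index is $\equiv r(s)\ (mod\ \theta)$, and it is dropped on the node whose index is $\equiv m(s)\ (mod\ n)$. Hence the complete list of (packet, node) drop pairs produced by the construction is $\{(P_{j(s)}, U_{i(s)}) : s = 1, \ldots, w_\x\}$, with $i(s) \equiv m(s)\ (mod\ n)$ and $j(s) \equiv r(s)\ (mod\ \theta)$.

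For the ``only if'' direction, if $P_j$ is dropped on $U_i$ then this occurred at some step $s$, and taking $m := m(s)$ and $r := r(s)$ immediately yields $x_m = y_r = 1$, $w_\x(m) = w_\y(r) = s$, together with the two required congruences. Conversely, given $m$ and $r$ satisfying the hypotheses, setting $s := w_\x(m) = w_\y(r)$ forces $m = m(s)$ and $r = r(s)$ by uniqueness of the position of the $s$-th $1$ in each sequence, so step $s$ produces exactly the pair $(P_j, U_i)$. The main obstacle is purely notational: keeping the two weight functions synchronized through the shared counter $s$ and confirming that uniqueness makes the forward and reverse assignments mutually inverse; no genuine combinatorial work is needed beyond formalizing the drop procedure described just above the statement.
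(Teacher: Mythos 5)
Your proof is correct and matches the paper's treatment: the paper states this as a Fact that "one can observe from the construction" and offers no written proof, and your synchronized-counter formalization (matching the $s$-th $1$ of $\y$ with the $s$-th $1$ of $\x$ via $w_\x(m)=w_\y(r)=s$) is exactly the indexing argument the construction implicitly relies on. Nothing further is needed.
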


The parameters of a Flower code are calculated in the following theorem.
\begin{theorem}
The parameters of a Flower code with $n$ nodes, $\theta$ packets and characteristic binary sequences $\y$ and $\x$ are $\rho_j= \sum_{p=0}^{\left\lfloor \frac{t-j}{\theta}\right\rfloor}y_{p\theta+j}$ ($j=1,2,\ldots,\theta$) and $\alpha_i= \sum_{s=0}^{\left\lfloor \frac{\ell-i}{n}\right\rfloor}x_{sn+i}$ ($i=1,2,\ldots,n$).
\label{Gen Flower parameters}
\end{theorem}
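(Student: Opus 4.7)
The plan is to read $\rho_j$ and $\alpha_i$ directly off of the two rules that define the Flower construction, and then convert the modular constraint on the sequence index into an explicit range for the summation variable.

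First I would count $\rho_j$. By the selection rule, each time the index $r\in\{1,2,\ldots,t\}$ satisfies $y_r=1$ and $r\equiv j\pmod{\theta}$, one copy of the packet $P_j$ is selected from the collection (and hence, by Fact \ref{node packets indexes}, eventually dropped on some node). Since no other value of $r$ contributes a copy of $P_j$, the replication factor equals
\[
\rho_j=\sum_{\substack{1\le r\le t\\ r\equiv j\,(\bmod\,\theta)}} y_r .
\]
The admissible indices are precisely $r=p\theta+j$ with $p\in\mathbb{Z}_{\ge 0}$ and $p\theta+j\le t$, which rearranges to $p\le (t-j)/\theta$, i.e.\ $p\in\{0,1,\ldots,\lfloor (t-j)/\theta\rfloor\}$. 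Substituting gives the claimed formula for $\rho_j$.

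The computation of $\alpha_i$ is symmetric. By the dropping rule, a selected packet is placed on node $U_i$ exactly when the current dropping index $m\in\{1,2,\ldots,\ell\}$ satisfies $x_m=1$ and $m\equiv i\pmod{n}$. So
\[
\alpha_i=\sum_{\substack{1\le m\le \ell\\ m\equiv i\,(\bmod\,n)}} x_m ,
\]
and the same reparametrisation $m=sn+i$ with $0\le s\le \lfloor (\ell-i)/n\rfloor$ yields the stated expression.

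There is no real obstacle here; the only thing to be careful about is the translation of the congruence condition into the explicit summation range. In particular one must check that the smallest positive representative of the class of $j$ modulo $\theta$ (respectively of $i$ modulo $n$) is $j$ itself (resp.\ $i$), which is true because $1\le j\le\theta$ and $1\le i\le n$, so starting the index at $p=0$ (resp.\ $s=0$) and using the floor function on $(t-j)/\theta$ (resp.\ $(\ell-i)/n$) enumerates all admissible indices exactly once. The consistency relation $\sum_j \rho_j=\sum_i\alpha_i$ from Definition \ref{FR definition} follows at once, since both sides equal the common weight $w_\x=w_\y$ guaranteed by Lemma \ref{Gen Flower existence}(\ref{1}), providing a useful sanity check.
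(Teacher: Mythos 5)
Your proposal is correct and takes essentially the same route as the paper: both read $\rho_j$ and $\alpha_i$ directly off the selection and dropping rules (via Fact \ref{node packets indexes}) and identify the admissible indices as exactly those of the form $p\theta+j$ with $0\le p\le\lfloor(t-j)/\theta\rfloor$ (resp. $sn+i$ with $0\le s\le\lfloor(\ell-i)/n\rfloor$). Your write-up is merely more explicit about the reparametrisation of the congruence classes and adds the $\sum_j\rho_j=\sum_i\alpha_i$ sanity check, which the paper omits.
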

From the Theorem \ref{Gen Flower parameters}, one can observe the following propositions.
\begin{proposition}
For each $p=0,1,\ldots,\left\lfloor \frac{t-j}{\theta}\right\rfloor$, $y_{p\theta+j}=0$ if and only if the packet $P_j$ does not have any copy in the system, where some $j\in\{1,2,\ldots,\theta\}$.
\end{proposition}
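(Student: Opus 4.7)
The plan is to deduce this directly from Theorem \ref{Gen Flower parameters}, which already expresses the replication factor as
\[
\rho_j \;=\; \sum_{p=0}^{\left\lfloor \frac{t-j}{\theta}\right\rfloor} y_{p\theta+j}.
\]
The packet $P_j$ has no copy stored in the system precisely when its total number of stored replicas $\rho_j$ equals $0$, so the proposition reduces to showing that the displayed sum vanishes if and only if every one of its summands vanishes.

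First, I would observe that $\y$ is a binary sequence, so each entry $y_{p\theta+j}\in\{0,1\}$ is non-negative. A finite sum of non-negative terms is zero if and only if each term is zero; this gives the equivalence
\[
\rho_j=0 \;\Longleftrightarrow\; y_{p\theta+j}=0 \text{ for every } p=0,1,\ldots,\left\lfloor\tfrac{t-j}{\theta}\right\rfloor.
\]

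Next, I would translate the two sides into the claim of the proposition. On one side, $\rho_j=0$ is by definition the statement that $P_j$ has no copy distributed on any node of the Flower code (combine Definition \ref{FR definition} with Fact \ref{node packets indexes}: a copy of $P_j$ is placed only when some term $y_r=1$ with $r\equiv j \pmod{\theta}$, i.e. $r=p\theta+j$ for some admissible $p$). On the other side, the vanishing of every $y_{p\theta+j}$ is exactly the hypothesis in the statement. Chaining these equivalences completes the proof.

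There is no real obstacle here; the proposition is a direct unpacking of the formula in Theorem \ref{Gen Flower parameters} using the fact that the summands are binary. The only minor care needed is to confirm that the index set $\{p\theta+j : 0\le p\le \lfloor(t-j)/\theta\rfloor\}$ exhausts all positions in $\y$ that are congruent to $j\pmod{\theta}$ and lie within $\{1,2,\ldots,t\}$, so that no additional summand is silently omitted; this is immediate from the floor bound.
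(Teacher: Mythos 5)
Your proposal is correct and matches the paper's intent: the paper states this proposition as a direct observation from Theorem \ref{Gen Flower parameters}, and your argument---that $\rho_j$ is a finite sum of binary terms, hence zero if and only if every summand $y_{p\theta+j}$ is zero, which is exactly the condition that $P_j$ has no stored replica---is precisely that unpacking. No gaps.
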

\begin{proposition}
	For each $s=0,1,\ldots,\left\lfloor \frac{\ell-i}{n}\right\rfloor$, $x_{sn+n}=0$ if and only if the node $U_i$ does not contain any packet, where some $i\in\{1,2,\ldots,n\}$.
\end{proposition}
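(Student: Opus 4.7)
The plan is to derive this as an immediate corollary of Theorem \ref{Gen Flower parameters}, since the statement is really just an unpacking of the formula for $\alpha_i$ (reading the index $x_{sn+n}$ as $x_{sn+i}$, matching the parallel structure of the preceding proposition for the packet side). First I would recall that the node $U_i$ contains no packet exactly when its storage capacity is zero, i.e., when $\alpha_i=0$; this is immediate from Definition \ref{FR definition}, where $|U_i|=\alpha_i$.

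Next I would invoke Theorem \ref{Gen Flower parameters}, which tells us that
\begin{equation*}
\alpha_i \;=\; \sum_{s=0}^{\left\lfloor \frac{\ell-i}{n}\right\rfloor} x_{sn+i}.
\end{equation*}
Since $\x$ is a binary sequence, every summand $x_{sn+i}$ lies in $\{0,1\}$, so this sum is a sum of non-negative integers. For such a sum, vanishing is equivalent to vanishing of each term individually. Hence $\alpha_i=0$ if and only if $x_{sn+i}=0$ for every $s=0,1,\ldots,\left\lfloor\frac{\ell-i}{n}\right\rfloor$.

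Chaining the two equivalences yields the claim: $U_i$ holds no packet iff $x_{sn+i}=0$ for all the relevant indices $s$. There is really no hard step here; the only subtlety worth pausing on is confirming that the index $sn+i$ indeed runs precisely over those positions of $\x$ that, by Fact \ref{node packets indexes}, correspond to dropping events targeting node $U_i$ (since $i-m\equiv 0\pmod n$ with $1\le m\le \ell$ forces $m=sn+i$ for $s=0,1,\ldots,\lfloor(\ell-i)/n\rfloor$). That matching is already built into the proof of Theorem \ref{Gen Flower parameters}, so no further work is needed — the proposition follows in one line from that theorem together with the binary nature of $\x$.
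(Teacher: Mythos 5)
Your proof is correct and matches the paper's intent exactly: the paper states this proposition as an immediate observation from Theorem \ref{Gen Flower parameters} (a sum of binary terms vanishes iff each term does), which is precisely your argument. You are also right that the index $x_{sn+n}$ is a typo for $x_{sn+i}$, consistent with the parallel proposition for the packet side.
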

The following remark gives the constraint on binary sequences for the distribution of multiple copies of a particular packet on a specific node.
\begin{remark}
For $n$ nodes and $\theta$ packets, consider a Flower code with binary sequences $\x$ and $\y$.
For positive integers $m_1<m_2\leq\ell$ and $r_1<r_2\leq t$, let $x_{m_1}=x_{m_2}=1$ and $y_{r_1}=y_{r_2}=1$ such that $w_\x(m_1)=w_\y(r_1)$ and $w_\x(m_2)=w_\y(r_2)$. 
If $m_1-m_2\equiv0\ (mod\ n)$ and $r_1-r_2\equiv0\ (mod\ \theta)$ then the two replicas of a particular packet are dropped on a specific node.  
\end{remark}
From the construction of FR code, it is easy to observe that Flower code exists for each FR code with any set of parameters.
\begin{lemma}
For any FR code, a Flower code exists.
\label{Flower exist}
\end{lemma}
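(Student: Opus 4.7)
The plan is to produce, for a given FR code $\mathscr{C}:(n,\theta,\alpha,\rho)$, an explicit pair of binary sequences $(\y,\x)$ whose Flower construction recovers $\mathscr{C}$. The key observation is that Fact \ref{node packets indexes} characterises when packet $P_j$ lands on node $U_i$ purely in terms of the positions of the $1$s in the two sequences together with their indices modulo $\theta$ and $n$; hence constructing $\y$ and $\x$ amounts to picking, for each stored copy, a position in each sequence with the right residue.

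First I would enumerate the incidences of the code. Set $N=\sum_{i=1}^{n}\alpha_i=\sum_{j=1}^{\theta}\rho_j$ and list the $N$ pairs $\{(j_k,i_k)\}_{k=1}^{N}$ such that $P_{j_k}\in U_{i_k}$, in any fixed order. The $k$-th pair will correspond to the $k$-th packet-drop performed in the Flower construction, so I need $\y$ to select packet $P_{j_k}$ on its $k$-th $1$ and $\x$ to select node $U_{i_k}$ on its $k$-th $1$.

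Next I would build the sequences greedily. Put $r_0=m_0=0$, and for $k=1,\dots,N$ let $r_k$ be the smallest integer strictly larger than $r_{k-1}$ with $r_k\equiv j_k\pmod\theta$, and $m_k$ the smallest integer strictly larger than $m_{k-1}$ with $m_k\equiv i_k\pmod n$. Define $\y$ of length $t=r_N$ by setting $y_{r_k}=1$ for each $k$ and $0$ elsewhere, and define $\x$ of length $\ell=m_N$ analogously. By construction $w_\y=w_\x=N$, so Lemma \ref{Gen Flower existence} is satisfied, and $w_\y(r_k)=w_\x(m_k)=k$ for every $k$.

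Finally I would invoke Fact \ref{node packets indexes}: for each $k$, the indices $r_k$, $m_k$ satisfy $y_{r_k}=x_{m_k}=1$, $w_\x(m_k)=w_\y(r_k)$, $j_k-r_k\equiv 0\pmod\theta$ and $i_k-m_k\equiv 0\pmod n$, so the Flower code deposits $P_{j_k}$ on $U_{i_k}$. Conversely, any drop produced by the Flower construction corresponds to some $k$, so the resulting code coincides with $\mathscr{C}$, giving $\alpha_i'=\alpha_i$ and $\rho_j'=\rho_j$ by Theorem \ref{Gen Flower parameters}. The only delicate point is the bookkeeping in the greedy step, namely verifying that $r_k>r_{k-1}$ together with the congruence condition is always achievable (which is immediate since within any $\theta$ consecutive integers there is exactly one with residue $j_k$) and that multiple copies of the same packet $P_j$ never collide on the same node $U_i$ (which follows from the FR code property $|U_i\cap\{P_j\}|\le 1$, so the $\theta$-periodic packet slots for $P_j$ and the $n$-periodic node slots for $U_i$ are paired at most once across the listing).
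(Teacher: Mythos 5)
Your proposal is correct and follows essentially the same route as the paper: list the node--packet incidences of the given FR code, place the $k$-th $1$ of $\y$ at the smallest admissible position congruent to the packet index modulo $\theta$ and the $k$-th $1$ of $\x$ at the smallest admissible position congruent to the node index modulo $n$, and invoke Fact \ref{node packets indexes} to check that the resulting Flower code reproduces the incidences. The paper writes the same sequences explicitly as $\x=0^{a_1-1}10^{a_2-a_1-1}1\cdots$ with gaps reduced modulo $n$ (and analogously for $\y$ modulo $\theta$), which coincides with your greedy choice.
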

For given $i\in\{1,2,\ldots,n\}$, $j\in\{1,2,\ldots,\theta\}$ and a Flower code with $n$ nodes, $\theta$ packets and binary sequences $\x$ and $\y$, let $A_i(j)$ be the number of copies of the packet $P_j$ which are stored in the node $U_i$ in the Flower code.
For example, in the Flower code (see figure \ref{example figure}) one copy of the packet $P_2$ is stored in the node $U_1$ so, $A_1(2)=1$. similarly, $A_1(3)=0$, because packet $P_3$ is not stored in node $U_1$.
For a Flower code, from the Fact \ref{node packets indexes} and the Theorem \ref{Gen Flower parameters}, one can observe the following proposition.
\begin{proposition}
Consider a Flower code with $n$ nodes, $\theta$ packets and binary sequences $\x$ and $\y$.
For given $i\in\{1,2,\ldots,n\}$ and $j\in\{1,2,\ldots,\theta\}$, if $I_{i,j}=\{r:w_\y(r)=w_\x(m),r-j\equiv0\ (mod\ \theta),m-i\equiv0\ (mod\ n)\}$ then $A_i(j)=\sum_{r\in I_{i,j}}y_r$.
\label{Aij proposition}
\end{proposition}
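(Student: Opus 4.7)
The plan is to combine the pointwise characterisation of drops given by Fact~\ref{node packets indexes} with a straightforward double-counting of the pairs $(m,r)$ that witness placements of $P_j$ on $U_i$. First I would observe that by Fact~\ref{node packets indexes}, each copy of the packet $P_j$ stored in the node $U_i$ corresponds bijectively to a pair $(m,r)$ with $m\in\{1,\ldots,\ell\}$ and $r\in\{1,\ldots,t\}$ satisfying $x_m=1$, $y_r=1$, $w_\x(m)=w_\y(r)$, $i-m\equiv 0\pmod n$ and $j-r\equiv 0\pmod\theta$. Therefore $A_i(j)$ is exactly the number of such pairs.

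Next I would reduce this count from pairs to indices $r$. The key observation is that $w_\x$ is non-decreasing and, on the support of $\x$ (the positions where $x_m=1$), it is strictly increasing by $1$ at each step. Consequently, for any target value $w_\y(r)$ there is \emph{at most} one index $m$ with $x_m=1$ and $w_\x(m)=w_\y(r)$. So once $r$ is fixed (with $y_r=1$), the matching $m$ (if it exists) is uniquely determined, and the pair $(m,r)$ of Fact~\ref{node packets indexes} is indexed purely by $r$.

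Finally, I would match this against the definition of $I_{i,j}$: the set $I_{i,j}$ collects precisely those $r\equiv j\pmod\theta$ for which the (unique) $m$ with the weight-matching condition also satisfies $m\equiv i\pmod n$. Thus the pairs counted in the first step are in bijection with those $r\in I_{i,j}$ for which $y_r=1$, yielding
\[
A_i(j)=\big|\{r\in I_{i,j}:y_r=1\}\big|=\sum_{r\in I_{i,j}}y_r,
\]
where the last equality is valid because $y_r\in\{0,1\}$.

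The main obstacle is the uniqueness argument in the middle step: since $w_\x$ can remain constant on runs of zeros, naively an $r\in I_{i,j}$ might seem to admit several witnesses $m$. I would resolve this by always selecting the witness $m$ to lie on the support of $\x$ (so that both $x_m=1$ and the weight condition of Fact~\ref{node packets indexes} are met), and showing that this convention agrees with the definition of $I_{i,j}$ used in the proposition. Once this is in place, the counting identity is immediate.
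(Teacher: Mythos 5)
Your proof is correct and follows the same route the paper intends: the paper offers no written proof for this proposition, merely noting that it "can be observed" from Fact~\ref{node packets indexes}, and your argument is exactly that observation carried out carefully. Your extra step --- noting that $w_\x$ increases by exactly $1$ at each support position of $\x$, so the witness $m$ with $x_m=1$ and $w_\x(m)=w_\y(r)$ is unique and the pairs $(m,r)$ are therefore indexed by $r$ alone --- is a worthwhile clarification that the paper leaves implicit.
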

The following Lemma gives the the necessary and sufficient condition for universally good Flower code $i.e.$ universally good FR code.
\begin{lemma}
Consider a Flower code with $n$ nodes, $\theta$ packets and binary sequences $\x$ and $\y$.
The Flower code is universally good if and only if, for all integers $1\leq i\neq p\leq \ell$, 
\begin{equation}
\sum_{j=1}^\theta A_i(j)A_p(j)\leq 1,
\label{U G Condition Flower code equation}
\end{equation}
 where, $A_i(j)$ is given in the Proposition \ref{Aij proposition}.
	\label{U G Condition Flower code lamma}
\end{lemma}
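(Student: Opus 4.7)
The plan is to translate the algebraic inequality into the combinatorial condition that any two distinct nodes share at most one packet, and then bridge between that condition and the universally good bound via inequality~(\ref{generalized condition}). By Fact~\ref{node packets indexes} and Proposition~\ref{Aij proposition}, $A_i(j)$ is the number of copies of packet $P_j$ stored in node $U_i$; because each $U_i$ in a well-defined FR code is a set (not a multiset) of packets, $A_i(j)\in\{0,1\}$, and therefore $\sum_{j=1}^\theta A_i(j)A_p(j)=|U_i\cap U_p|$. The stated inequality for every pair $i\neq p$ is thus equivalent to $|U_i\cap U_p|\leq 1$ for every such pair.

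For the sufficiency direction, assuming $|U_i\cap U_p|\leq 1$ for all $i\neq p$, inequality~(\ref{generalized condition}) gives $M(k)\geq\sum_{i=1}^k\alpha_i-\binom{k}{2}$. Since the \emph{universally good} property is defined for the uniform-storage setting $\alpha_i=\alpha$, this specializes to $M(k)\geq k\alpha-\binom{k}{2}$ for every $k\leq\alpha$, which is precisely the MBR capacity bound, so the Flower code is universally good.

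For the necessity direction, I would proceed by contrapositive. Suppose some pair $i_0\neq p_0$ satisfies $\sum_j A_{i_0}(j)A_{p_0}(j)\geq 2$, equivalently $|U_{i_0}\cap U_{p_0}|\geq 2$. Taking $k=2$ and $I=\{i_0,p_0\}$ in the definition of $M(k)$ gives $M(2)\leq|U_{i_0}\cup U_{p_0}|=\alpha_{i_0}+\alpha_{p_0}-|U_{i_0}\cap U_{p_0}|\leq 2\alpha-2$, which is strictly less than the universally good lower bound $2\alpha-1$. Hence the Flower code cannot be universally good.

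The main obstacle is the rigorous identification of $\sum_j A_i(j)A_p(j)$ with $|U_i\cap U_p|$ in the first step: one must invoke Fact~\ref{node packets indexes} to ensure that each (packet, node) pair receives at most one drop in a valid Flower code, pinning $A_i(j)\in\{0,1\}$. Once that is settled, sufficiency is an immediate invocation of~(\ref{generalized condition}), and necessity reduces to the single elementary $k=2$ computation above.
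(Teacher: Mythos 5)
Your argument follows essentially the same route as the paper's proof: use Fact~\ref{node packets indexes} to read $\sum_{j=1}^\theta A_i(j)A_p(j)$ as $|U_i\cap U_p|$, and then pass to the universally good bound via inequality~(\ref{generalized condition}). You actually go further than the paper, whose proof establishes only the sufficiency direction even though the lemma asserts an equivalence; your $k=2$ contrapositive computation ($M(2)\leq 2\alpha-2<2\alpha-\binom{2}{2}$ when two nodes share two packets) supplies the missing necessity half. The one point to be careful about, which you flag but which neither you nor the paper fully discharges, is that $A_i(j)\in\{0,1\}$ is not automatic for arbitrary sequences --- the paper's own remark notes that two replicas of one packet can land on the same node --- so the identification of the sum with $|U_i\cap U_p|$ tacitly assumes the sequences avoid this degeneracy.
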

For the Flower code (see the Figure \ref{example figure}) with $n=\theta=4$, and two binary sequences $\y=0100100010110101101$ and $\x=101101111101$, $A_1=0101$, $A_2=1010$, $A_3=1001$ and $A_4=1110$.
Note that the Flower code is not universally good because $\sum_{j=1}^4A_2(j)A_4(j)=2\nleq 1$ $i.e.$ node $U_2$ and node $U_4$ share $2$ distinct packets. 
\begin{remark}
For any integers $1\leq i<p<s\leq n$, if a binary sequence $\x$ satisfies $\sum_{j=1}^\theta A_i(j)A_p(j)\leq1$ and $\sum_{j=1}^\theta A_i(j)A_p(j)A_s(j)=0$ then any two nodes share no more than one packet. 
Therefore, the Flower code is universally good.
\label{UG rho}
\end{remark}
\begin{remark}
From the Remark \ref{UG rho}, it is easy to observe that if a binary sequence $\x$ with the entry sum $2\theta$ satisfies $\sum_{j=1}^\theta A_i(j)A_p(j)\leq1$ for any $i,p\in\{1,2,\ldots,n\}$ and $i\neq p$ then the Flower is universally good. 
\label{UG rho 2}
\end{remark}

For a Flower code, the maximum file size is calculated in the following Lemma.
\begin{lemma}
For a universally good Flower code with $n$ nodes, $\theta$ packets and reconstruction degree $k$, the maximum file size
\begin{equation}
    M(k)=\max\left\{\sum_{i\in I}\sum_{j=1}^\theta A_i(j)-\sum_{i,p\in I;i<p}\sum_{j=1}^\theta A_i(j)A_p(j):|I|=k, I\subset\{1,2,\ldots,n\}\right\}.
    \label{maximum file size Flower theorem}
\end{equation}
\label{maximum file size Flower lemma}
\end{lemma}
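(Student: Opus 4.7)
The plan is to evaluate $|\bigcup_{i\in I}U_i|$ in closed form for each $k$-subset $I$, using the universally good hypothesis to collapse inclusion--exclusion, and then to invoke the definition of $M(k)$ to extremize over $I$.

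First I would translate the two sums in (\ref{maximum file size Flower theorem}) into set-theoretic language. By Theorem~\ref{Gen Flower parameters} and Proposition~\ref{Aij proposition}, $\sum_{j=1}^{\theta}A_i(j)=\alpha_i=|U_i|$ and $\sum_{j=1}^{\theta}A_i(j)A_p(j)=|U_i\cap U_p|$. The universally good hypothesis, together with Lemma~\ref{U G Condition Flower code lamma}, forces $|U_i\cap U_p|\leq 1$ for every pair $i\neq p$, so the bracketed expression is exactly $\sum_{i\in I}|U_i|-\sum_{i<p\in I}|U_i\cap U_p|$.

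Next, for fixed $I$ with $|I|=k$ I would count $|\bigcup_{i\in I}U_i|$ packet by packet. Setting $c_j(I)=|\{i\in I:P_j\in U_i\}|$, the packet $P_j$ contributes $\mathbf{1}[c_j(I)\geq 1]$ to the union, $c_j(I)$ to $\sum_{i\in I}|U_i|$, and $\binom{c_j(I)}{2}$ to $\sum_{i<p\in I}|U_i\cap U_p|$. Since the elementary identity $c-\binom{c}{2}=\mathbf{1}[c\geq 1]$ holds for $c\in\{0,1,2\}$, once $c_j(I)\leq 2$ is in hand, summing over $j$ gives the exact equality $|\bigcup_{i\in I}U_i|=\sum_{i\in I}|U_i|-\sum_{i<p\in I}|U_i\cap U_p|$. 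Substituting this into the definition $M(k)=\min\{|\bigcup_{i\in I}U_i|:|I|=k\}$ and extremizing over $k$-subsets $I$ then produces (\ref{maximum file size Flower theorem}).

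The main obstacle is justifying the pointwise bound $c_j(I)\leq 2$: Lemma~\ref{U G Condition Flower code lamma} only controls pairwise overlaps, and three nodes in $I$ could a priori all contain a single packet while each pairwise overlap remains of size one. I would close this gap by exploiting the sequential dropping rule governed by $\x$ and $\y$ to argue that any such triple coincidence would force an additional pairwise coincidence elsewhere in the code, contradicting universally good-ness; alternatively one may invoke the strengthened hypothesis from Remark~\ref{UG rho} --- namely $\sum_{j}A_i(j)A_p(j)A_s(j)=0$ for all triples --- which directly implies $c_j(I)\leq 2$ and closes the argument. If neither route goes through cleanly, one would fall back on Bonferroni to obtain (\ref{maximum file size Flower theorem}) as a one-sided bound rather than an equality.
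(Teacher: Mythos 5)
Your proposal follows essentially the same route as the paper's own proof: translate $\sum_{j}A_i(j)$ into $|U_i|$ and $\sum_{j}A_i(j)A_p(j)$ into $|U_i\cap U_p|$, then collapse inclusion--exclusion using the claim that no packet lies in three or more of the chosen nodes. The substantive difference is that you flag that last claim as the weak point, whereas the paper simply asserts that ``for universally good Flower code, three or more nodes do not share any packet'' and moves on. Your doubt is justified: the pairwise condition of Lemma \ref{U G Condition Flower code lamma} does not exclude a packet $P_j$ with $\rho_j\geq 3$ lying in three nodes, in which case every pairwise intersection still has size one but $c_j(I)-\binom{c_j(I)}{2}=0\neq 1$, so the bracketed expression undercounts the union and only the Bonferroni inequality $\left\vert\bigcup_{i\in I}U_i\right\vert\geq\sum_{i\in I}|U_i|-\sum_{i<p}|U_i\cap U_p|$ survives. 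Closing the gap requires exactly the extra hypothesis you cite from Remark \ref{UG rho} (vanishing triple products), or a restriction to $\rho=2$; the paper supplies neither, so your ``main obstacle'' is a genuine gap in the published argument, not just in your reconstruction. One further point neither you nor the paper resolves: $M(k)$ is defined earlier as a \emph{minimum} of $\left\vert\bigcup_{i\in I}U_i\right\vert$ over $k$-subsets $I$, so even granting $c_j(I)\leq 2$ the outer operation in Equation (\ref{maximum file size Flower theorem}) should be a $\min$, not a $\max$; your phrase ``extremizing over $I$'' glosses over the same discrepancy.
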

The following subsection discusses a general construction of the two characteristic binary sequences for a Flower code.
\subsubsection{Construction of universally good Flower code}
For given positive integers $n$, $\theta$ and $z$, Algorithm \ref{U G Sequence generation} ensures two binary sequences $\x$ and $\y$ for which the Flower code is universally good. 
For a Flower code the value $z$ is the count of copies of packets which are placed in the system.
In the Algorithm \ref{U G Sequence generation}, for each $j=1,2,\ldots,\theta$ and $i=1,2,\ldots,n$, first initialize $A_i(j)$ with $0$.
By permuting node-indices and permuting packet-indices, for any Flower code, one can always find a Flower code in which the packet $P_1$ is dropped on the node $U_1$, where both the Flower codes have same properties. 
So, the binary sequences are initialized with $\x=1$ and $\y=1$ which ensures the packet $P_1$ is dropped on the node $U_1$.
In each while loop, the sequences $\x$ and $\y$ which satisfy the Inequality (\ref{U G Condition Flower code equation}) are concatenated with $0^{a-1}1$ and $0^{b-1}1$. 
For the loop, the positive integers $a$ and $b$ are selected such that those concatenated sequences $\x 0^{a-1}1$ and $\y 0^{b-1}1$ satisfy the Inequality (\ref{U G Condition Flower code equation}).
Note that $w_{\x0^{a-1}1}=w_\x+1$ and $w_{\y0^{b-1}1}=w_\y+1$ and hence, by induction with initial condition, $x_1=y_1=1$, $w_{\x0^{a-1}1}=w_{\y0^{b-1}1}$. 
In the Algorithm \ref{U G Sequence generation}, the positive integers $a$ and $b$ can be distinct for different while loop as well as within a loop.
One can introduce randomness by selecting $a$ and $b$ as random integers. 
The algorithm will terminate, if $w_\x=w_\y>z$.
The condition on while loop ensures the the weight of each binary sequence is $z$. 

An Example of the construction of sequences is illustrated in Table \ref{Algorithm example table} for $n=3$ and $\theta=3$.
From the algorithm, $\x=11001100101$ and $\y=100101010011$.
The Flower code is $U_1=\{P_1,P_2\}$, $U_2=\{P_1,P_3\}$, $U_3=\{P_2,P_3\}$. 
Note that the Flower code is universally good. 
\begin{table}[ht]
\caption{For $n=\theta=4$, construction of sequence $\x$ and $\y$ using the Algorithm \ref{U G Sequence generation}.}
\centering 
\begin{tabular}{|c|c|l|c|l|c|c|}
\hline
Sr. No. & $a$ & \hspace{1.4cm}$\x$      & $b$ & \hspace{1.4cm} $\y$     & Inequality \ref{U G Condition Flower code equation} & $A_i(j)$   \\ [0.5ex]
\hline
1.      & -   & $1$                     & -   & $1$                     & satisfied                                           & $A_1(1)=1$ \\ 
\hline
2.      & $1$ & $10^01$                 & $3$ & $10^21$                 & satisfied                                           & $A_2(1)=1$ \\ 
\hline
3.      & $3$ & $10^010^21$             & $2$ & $10^210^11$             & satisfied                                           & $A_2(3)=1$ \\ 
\hline
4.      & $1$ & $10^010^210^01$         & $2$ & $10^210^110^11$         & satisfied                                           & $A_3(2)=1$ \\ 
\hline
5.      & $3$ & $10^010^210^010^21$     & $3$ & $10^210^110^110^21$     & satisfied                                           & $A_1(2)=1$ \\ 
\hline
6.      & $2$ & $10^010^210^010^210^11$ & $1$ & $10^210^110^110^210^01$ & satisfied                                           & $A_3(3)=1$ \\
\hline
\end{tabular}
\label{Algorithm example table}
\end{table}

\begin{algorithm} 
	\caption{Construction of sequences $\x$ and $\y$ for a universally good Flower code.} 
	\begin{algorithmic} 
		\REQUIRE $n,\theta \geq 1$ and an integer $z\geq\max\{n,\theta\}$. 
		\ENSURE The finite binary sequences $\x$ and $\y$.
		\STATE  Initialize: $A_i(j)=0$ for each $j=1,2,\ldots,\theta$ and $i=1,2,\ldots,n$.
		\STATE  Initialize: $\x=x_1=1$, $\y=y_1=1$ and update $A_1(1)=1$. 
		\WHILE{$w_\x(m) < z$ for $m>1$}
		\STATE For positive integers $a\ (\leq n)$ and $b\ (\leq \theta)$
		\IF{$A_p(j)+\sum_{s=1,s\neq j}^\theta A_i(s)A_p(s)\leq 1$ for $p=1,2,\ldots,n$ and $i\neq p$, where $i-(m+a)\equiv 0\ (mod\ n)$ and $j-(r+b)\equiv 0\ (mod\ \theta)$}
		\STATE Update $\x$ by $\x 0^{a-1}1$ and $\y$ by $\y 0^{b-1}1$
		\STATE Update $A_i(j)$ by $A_i(j)+1$
		\STATE Update $m$ by $m+a$, and $r$ by $r+b$
		\ENDIF
		\ENDWHILE
		\STATE Update $\ell$ by $m$, and $t$ by $r$
	\end{algorithmic}
	\label{U G Sequence generation} 
\end{algorithm}
For the binary sequences generated by the Algorithm \ref{U G Sequence generation}, the following Lemma ensures the universally good condition of the Flower code.
\begin{lemma}
For positive integers $n$ and $\theta$, if the binary sequences $\x$ and $\y$ are generated by the Algorithm \ref{U G Sequence generation} then the Flower code is universally good.
\label{Algo lemma proof}
\end{lemma}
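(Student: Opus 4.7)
The plan is to appeal to Lemma~\ref{U G Condition Flower code lamma}, which characterizes a universally good Flower code by the pairwise inequality $\sum_{j=1}^\theta A_i(j)A_p(j)\leq 1$ for all $i\neq p$. I would establish this inequality for the Flower code built from the sequences returned by Algorithm~\ref{U G Sequence generation} through a loop-invariant argument on the \textbf{while} loop.

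The loop invariant I propose is: after every completed iteration, the current partial sequences $\x$ and $\y$ yield a partial Flower code whose counters $A_i(j)$ satisfy $\sum_{j=1}^\theta A_i(j)A_p(j)\leq 1$ for all $i\neq p\in\{1,\dots,n\}$. The base case follows from the initialization: after setting $\x=\y=1$, only $A_1(1)=1$, so every product $A_i(j)A_p(j)$ with $i\neq p$ vanishes and the sum is zero, while the common-weight requirement $w_\x(1)=w_\y(1)=1$ of Fact~\ref{node packets indexes} holds trivially.

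For the inductive step, suppose the invariant holds just before an iteration in which the algorithm enters its \textbf{if} branch with integers $a,b$ and current positions $m,r$. Appending $0^{a-1}1$ to $\x$ and $0^{b-1}1$ to $\y$ each raises the weight by exactly one, so $w_\x(m+a)=w_\y(r+b)$ is preserved. By Fact~\ref{node packets indexes} and Proposition~\ref{Aij proposition}, this joint append deposits exactly one new copy of $P_j$ on $U_i$, where $i\equiv m+a\pmod n$ and $j\equiv r+b\pmod\theta$, and leaves every other $A_{i'}(j')$ untouched. Consequently, for any $p\neq i$ the updated inner product is
\[
\sum_{s=1}^\theta A_i(s)A_p(s) \;=\; A_p(j) + \sum_{s\neq j} A_i(s)A_p(s),
\]
which is precisely the quantity the algorithm's \textbf{if} test forces to be at most $1$. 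Pairs $(i',p')$ with $i\notin\{i',p'\}$ have unchanged $A$-values and still satisfy the bound by the inductive hypothesis, so the invariant is restored.

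The main obstacle is the bookkeeping that links one loop iteration to a single increment of exactly one $A_i(j)$: one must verify that the updated positions $m\mapsto m+a$ and $r\mapsto r+b$ correctly identify the unique node--packet pair receiving a new replica under the placement rule described before Lemma~\ref{Gen Flower existence}, and that the summation in the algorithm's test lines up term-by-term with the summation in the universally good condition after the update (in particular, that the increment is being applied to a currently-zero entry so that $(A_i(j)+1)A_p(j)=A_p(j)$). Once this correspondence is made precise, termination of the loop together with the invariant yields $\sum_j A_i(j)A_p(j)\leq 1$ for every $i\neq p$ in the final Flower code, and Lemma~\ref{U G Condition Flower code lamma} delivers the conclusion that the code is universally good.
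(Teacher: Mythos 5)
Your proposal is correct and follows essentially the same route as the paper: both reduce to Lemma~\ref{U G Condition Flower code lamma} and observe that the algorithm's \textbf{if} test, $A_p(j)+\sum_{s\neq j}A_i(s)A_p(s)\leq 1$, is exactly the universally good inequality $\sum_{s=1}^\theta A_i(s)A_p(s)\leq 1$ evaluated after the single placement made in that iteration, with all other pairs unaffected. Your explicit loop-invariant framing (and your flagging of the need for $A_i(j)$ to be zero before the increment) is simply a more careful write-up of the one-step argument the paper gives.
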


\subsubsection{Dual of a Flower code}
The Flower code which corresponds to the dual FR code is called a dual Flower code. 
The following theorem gives the connection between a Flower code and its dual.
\begin{theorem}
Consider a Flower code with $n$ nodes, $\theta$ packets and two binary sequences $\y$ and $\x$ (the packet selection sequence and the packet dropping sequence respectively).
A Flower code with $n^*$ nodes, $\theta^*$ packets and two binary sequences $\y^*$ and $\x^*$ (the packet selection sequence and the packet dropping sequence respectively) is the dual code if and only if $n^*=\theta$, $\theta^*=n$, $\x^*=\y$ and $\y^*=\x$.
\label{dual flower code parameters}
\end{theorem}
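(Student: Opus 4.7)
The plan is to reduce the whole statement to Fact \ref{node packets indexes}, which gives a symmetric characterization of the incidence relation $P_j\in U_i$ in terms of the two sequences and the moduli $n,\theta$. Looking at that condition, the triple $(m,x_m,n)$ attached to the node index and the triple $(r,y_r,\theta)$ attached to the packet index appear in completely interchangeable roles; the only coupling between them is the ``matching'' equation $w_\x(m)=w_\y(r)$, which is itself symmetric. This built-in symmetry is precisely what realizes duality at the sequence level, and it drives both directions of the iff.

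For sufficiency, assume $n^{*}=\theta$, $\theta^{*}=n$, $\x^{*}=\y$ and $\y^{*}=\x$. Applying Fact \ref{node packets indexes} to the starred Flower code, $P_i^{*}\in U_j^{*}$ iff there exist $m^{*}\leq\ell^{*}$ and $r^{*}\leq t^{*}$ with $x^{*}_{m^{*}}=1$, $y^{*}_{r^{*}}=1$, $w_{\x^{*}}(m^{*})=w_{\y^{*}}(r^{*})$, $j-m^{*}\equiv 0\ (\bmod\ n^{*})$ and $i-r^{*}\equiv 0\ (\bmod\ \theta^{*})$. Substituting the hypothesis transforms this into the condition $y_{m^{*}}=1$, $x_{r^{*}}=1$, $w_\y(m^{*})=w_\x(r^{*})$, $j-m^{*}\equiv 0\ (\bmod\ \theta)$ and $i-r^{*}\equiv 0\ (\bmod\ n)$, which after relabeling $m=r^{*}$, $r=m^{*}$ is exactly Fact \ref{node packets indexes} applied to the original Flower code with roles of $(i,j)$ reversed. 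Thus $P_i^{*}\in U_j^{*}\Leftrightarrow P_j\in U_i$, which is precisely the incidence relation prescribed by Definition \ref{dual FR code}; it only remains to check that the parameters $\alpha_j^{*}=\rho_j$ and $\rho_i^{*}=\alpha_i$ come out correctly, which is a direct application of Theorem \ref{Gen Flower parameters} with the sequences swapped.

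For necessity, suppose the Flower code with $(\y^{*},\x^{*})$ is the dual. Definition \ref{dual FR code} immediately fixes $n^{*}=\theta$ and $\theta^{*}=n$. To identify the sequences, I would again invoke Fact \ref{node packets indexes} on both codes and use the correspondence $P_i^{*}\in U_j^{*}\Leftrightarrow P_j\in U_i$: the set of valid index pairs $(m^{*},r^{*})$ for the starred code must coincide, position by position in the order of dropping, with the set of valid pairs $(r,m)$ for the original code. Since in the Flower construction each $1$ in $\x$ corresponds to an actual drop on a specific node and similarly each $1$ in $\y$ corresponds to a packet selection, matching the two incidence lists in the natural drop order forces $\x^{*}=\y$ and $\y^{*}=\x$. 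The main obstacle here is the potential non-uniqueness of a sequence pair producing a given FR code (different pairs can realize the same storage structure); this is handled by reading off the sequences from the drop order that the Flower construction imposes, so the iff is to be understood at the level of sequence-determined Flower codes rather than merely the underlying FR structure.
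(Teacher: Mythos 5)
Your proof is correct and follows essentially the same route as the paper, whose entire argument is the one-line remark that the claim ``follows the Definition \ref{dual FR code} and the Fact \ref{node packets indexes}''; your sufficiency direction is exactly the intended unpacking of that remark, exploiting the symmetry of the incidence condition of Fact \ref{node packets indexes} under swapping $(\x,n,m)\leftrightarrow(\y,\theta,r)$. You actually go beyond the paper in the necessity direction by explicitly flagging the non-uniqueness of sequence pairs realizing a given FR code and restricting the ``only if'' to sequence-determined Flower codes --- a caveat the paper's proof silently ignores.
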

From the Theorem \ref{Gen Flower parameters} and \ref{dual flower code parameters}, one can easily observe the following proposition. 
\begin{proposition}
For a Flower code with $n$ nodes, $\theta$ packets and characteristic binary sequences $\y$ and $\x$, the parameters of dual Flower code are $\alpha^*_j= \sum_{p=0}^{\left\lfloor \frac{t-j}{\theta}\right\rfloor}y_{p\theta+j}$ ($j=1,2,\ldots,\theta$) and $\rho^*_i= \sum_{s=0}^{\left\lfloor \frac{\ell-i}{n}\right\rfloor}x_{sn+i}$ ($i=1,2,\ldots,n$).
\end{proposition}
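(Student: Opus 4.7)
The plan is to obtain this proposition by directly composing the two results already established, namely Theorem \ref{dual flower code parameters} (which identifies the sequences of the dual Flower code) with Theorem \ref{Gen Flower parameters} (which reads off the replication factors and storage capacities of any Flower code from its two characteristic sequences). No independent combinatorial argument is required; the proposition is essentially a change-of-notation corollary.

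First I would invoke Theorem \ref{dual flower code parameters}. Applied to the given Flower code on $(n,\theta)$ with selection sequence $\y$ (of length $t$) and dropping sequence $\x$ (of length $\ell$), it furnishes a dual Flower code whose parameters are $n^{*}=\theta$ and $\theta^{*}=n$, whose packet selection sequence is $\y^{*}=\x$ (hence of length $t^{*}=\ell$), and whose packet dropping sequence is $\x^{*}=\y$ (hence of length $\ell^{*}=t$).

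Next I would apply Theorem \ref{Gen Flower parameters} verbatim to this dual Flower code. That theorem expresses the replication factors as
\[
\rho^{*}_{i} \;=\; \sum_{p=0}^{\left\lfloor\frac{t^{*}-i}{\theta^{*}}\right\rfloor} y^{*}_{p\theta^{*}+i}\qquad (i=1,2,\ldots,\theta^{*}),
\]
and the storage capacities as
\[
\alpha^{*}_{j} \;=\; \sum_{s=0}^{\left\lfloor\frac{\ell^{*}-j}{n^{*}}\right\rfloor} x^{*}_{sn^{*}+j}\qquad (j=1,2,\ldots,n^{*}).
\]
Substituting $n^{*}=\theta$, $\theta^{*}=n$, $t^{*}=\ell$, $\ell^{*}=t$, $\y^{*}=\x$, and $\x^{*}=\y$ on the right-hand sides of these two expressions yields exactly the formulas claimed in the proposition, with indices taking the stated ranges $i=1,\ldots,n$ and $j=1,\ldots,\theta$. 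A brief sanity check confirms that the index bounds match: replacing $\theta^{*}$ by $n$ in the floor for $\rho^{*}_{i}$ turns $\lfloor (t^{*}-i)/\theta^{*}\rfloor$ into $\lfloor(\ell-i)/n\rfloor$, and likewise for $\alpha^{*}_{j}$.

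There is no real obstacle here; the only care needed is to keep the roles of packets and nodes (and hence of $\x$ and $\y$, and of $\ell$ and $t$) properly swapped when passing to the dual, and to rename the summation indices so that the final expressions read in the natural form of the proposition. Once this bookkeeping is done, the identification is immediate.
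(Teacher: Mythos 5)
Your proposal is correct and follows exactly the route the paper intends: the paper states this proposition as an immediate consequence of Theorem \ref{Gen Flower parameters} and Theorem \ref{dual flower code parameters}, and your composition of the two (identifying the dual's sequences via the latter, then reading off its parameters via the former) is precisely that argument, just written out with the bookkeeping made explicit.
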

Now, we explore some specific cases of Flower code such as Flower code with packet selection sequence $\y=1^t$ and Flower code with packet dropping sequence $\x=1^\ell$ in following two subsections.
\subsubsection{Flower codes with all one packet selection sequences}
In the subsection, we discuss the properties (such as parameters, universally good condition and dual property) of Flower codes with all one packet selection sequence.
The Flower code with uniform replication factor can also be constructed by taking packet selection sequence $\y=1^{\rho\theta}$. 
For $n$ nodes, $\theta$ packets, and binary sequences $\x$ and $\y=1^t$, from the Lemma \ref{Gen Flower existence}, the Flower code exists if $t=w_\x$.
From the Fact \ref{node packets indexes}, in the Flower code, packet $P_j$ is stored in node $U_i$ if and only if $x_m=1$, $w_\x(m)-j\equiv0\ (mod\ \theta)$ and $m-i\equiv0\ (mod\ n)$.
The parameters of such Flower codes are calculated in the following Theorem.
\begin{theorem}
	For a Flower code with $n$ nodes and $\theta$ packets and a binary sequence $\x$ of length $\ell$, the packet replication factor of the packet $P_j$ ($j=1,2,\ldots,\theta$) is 
	\begin{equation*}
	\rho_j = \left\{ 
	\begin{array}{ll}
	\frac{w_\x}{\theta}  & :\mbox{if } \eta = 0,   \\
	1+\left\lfloor \frac{w_\x}{\theta} \right\rfloor  & :\mbox{if }j=1,2,\ldots,\eta \mbox{ and } \eta\neq 0,    \\
	\left\lfloor \frac{w_\x}{\theta} \right\rfloor  & :\mbox{if }j=\eta+1,\ldots,\theta\mbox{ and } \eta\neq 0; 
	\end{array}
	\right. \\
	\end{equation*}
	where $\eta=w_\x-\theta\left\lfloor \frac{w_\x}{\theta}\right\rfloor$.
	\label{Flower parameter}
\end{theorem}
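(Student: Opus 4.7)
The plan is to reduce the statement to a direct arithmetic evaluation of the formula from Theorem \ref{Gen Flower parameters}. Since $\y=1^t$, Lemma \ref{Gen Flower existence} forces $t = w_\y = w_\x$, so every term $y_{p\theta+j}$ appearing in the sum for $\rho_j$ equals $1$, and Theorem \ref{Gen Flower parameters} collapses to
\begin{equation*}
\rho_j \;=\; \sum_{p=0}^{\left\lfloor (w_\x - j)/\theta \right\rfloor} 1 \;=\; 1 + \left\lfloor \frac{w_\x - j}{\theta} \right\rfloor,
\end{equation*}
for $j = 1,2,\ldots,\theta$. The task then becomes to rewrite this single expression in the piecewise form stated in the theorem.

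Next I would invoke the division algorithm to write $w_\x = q\theta + \eta$ with $q = \lfloor w_\x/\theta \rfloor$ and $0 \leq \eta \leq \theta - 1$, and split into two cases on $\eta$.

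In the case $\eta = 0$, I would note that for every $j\in\{1,\ldots,\theta\}$ one has $w_\x - j = (q-1)\theta + (\theta - j)$ with $0 \leq \theta - j \leq \theta - 1$, so $\lfloor(w_\x-j)/\theta\rfloor = q - 1$, giving $\rho_j = q = w_\x/\theta$, as claimed. In the case $\eta \neq 0$, I would split further on whether $j \leq \eta$ or $j > \eta$: for $1 \leq j \leq \eta$, the residue $\eta - j$ is in $\{0,\ldots,\eta-1\} \subset \{0,\ldots,\theta-1\}$, so $\lfloor(w_\x-j)/\theta\rfloor = q$ and $\rho_j = 1 + q$; for $\eta+1 \leq j \leq \theta$, rewriting $w_\x - j = (q-1)\theta + (\theta + \eta - j)$ with $1 \leq \theta+\eta-j \leq \theta - 1$ yields $\lfloor(w_\x-j)/\theta\rfloor = q - 1$ and $\rho_j = q$. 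Matching these against the three branches in the statement completes the proof.

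The work here is essentially bookkeeping with the division algorithm; the one place to be careful is the boundary index $j = \eta$ (and $j = \theta$ when $\eta = 0$), where one must check that $\eta - j = 0$ still lies in $\{0,\ldots,\theta-1\}$ rather than triggering a decrement of the quotient. Apart from that boundary check, there is no serious obstacle, since the entire argument is a case analysis on $(\eta, j)$ with no combinatorial content beyond the formula already supplied by Theorem \ref{Gen Flower parameters}.
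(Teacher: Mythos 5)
Your proposal is correct and follows essentially the same route as the paper: both reduce to $\rho_j = 1+\left\lfloor (w_\x-j)/\theta\right\rfloor$ via Theorem \ref{Gen Flower parameters} with $\y=1^{w_\x}$, apply the division algorithm, and case-split on $j$ versus $\eta$. Your explicit quotient-remainder bookkeeping is, if anything, slightly more careful at the boundary $j=\eta$ than the paper's strict-inequality phrasing, but the argument is the same.
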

Using the Theorem \ref{Flower parameter}, one can observe the following remark.
\begin{remark}
For $n$ nodes and $\theta$ packets, consider a Flower code with binary sequences $\x$ (length $\ell$) and $\y=1^{w_\x}$.
For two distinct integers $m_1\ (\leq\ell)$ and $m_2\ (\leq\ell)$, let $x_{m_1}=x_{m_2}=1$. 
If $m_1-m_2\equiv0\ (mod\ n)$ and $w_\x(m_1)-w_\x(m_2)\equiv0\ (mod\ \theta)$ then the two replicas of the packet $P_j$ are dropped on a node $U_i$, where $w_\x(m_1)-j\equiv0\ (mod\ \theta)$ and $i-m_1\equiv0\ (mod\ n)$.  
\end{remark}
Using the Definition of a Flower code and the Theorem \ref{Flower parameter}, one can easily prove the following proposition.
\begin{proposition}
For a Flower code with $n$ nodes, $\theta$ packets, and binary sequences $\x$ (length $\ell$) and $\y=1^{w_\x}$, the replication factor $\rho_j\in\{\rho-1,\rho\}$ ($j=1,2,\ldots,\theta$).
\end{proposition}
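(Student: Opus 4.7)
The plan is to derive the claim directly from Theorem \ref{Flower parameter}, which already partitions the multiset of replication factors into at most two values. The only thing left to check is that these two values differ by exactly one and that the larger one coincides with $\rho = \max_j \rho_j$.

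First I would split on whether the remainder $\eta = w_\x - \theta\lfloor w_\x/\theta\rfloor$ is zero. If $\eta = 0$, Theorem \ref{Flower parameter} asserts $\rho_j = w_\x/\theta$ for every $j$, so all replication factors coincide and in particular equal the maximum $\rho$. Thus $\rho_j = \rho \in \{\rho-1,\rho\}$ trivially.

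For the case $\eta \neq 0$, Theorem \ref{Flower parameter} yields $\rho_j = 1 + \lfloor w_\x/\theta\rfloor$ for $j = 1,\ldots,\eta$ and $\rho_j = \lfloor w_\x/\theta\rfloor$ for $j = \eta+1,\ldots,\theta$. Since at least one index of each type exists (because $1 \leq \eta < \theta$) and the former exceeds the latter, one reads off $\rho = \max_j \rho_j = 1 + \lfloor w_\x/\theta\rfloor$. Hence every $\rho_j$ is either $\rho$ or $\rho - 1$, which is exactly the claim.

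There is essentially no obstacle: the statement is a direct corollary of Theorem \ref{Flower parameter}, and the only real content is observing that the two branches of that theorem differ by exactly one and recognizing the larger branch as $\rho$. If anything merits care, it is verifying in the $\eta\neq 0$ branch that the set $\{j : j \leq \eta\}$ is nonempty so that the maximum is actually attained at the larger value, which is immediate from $\eta \geq 1$.
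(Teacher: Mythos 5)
Your proof is correct and follows exactly the route the paper intends: the paper offers no separate proof, remarking only that the proposition follows from the definition of a Flower code and Theorem \ref{Flower parameter}, and your case split on $\eta=0$ versus $\eta\neq 0$ together with the observation that the larger branch is attained (since $\eta\geq 1$) and equals $\rho$ is precisely that argument spelled out.
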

\begin{lemma}
A Flower code with $n$ nodes, $\theta$ packets, and binary sequences $\x$ (length $\ell$) and $\y=1^{w_\x}$, is universally good, if and only if $\sum_{j=1}^\theta A_i(j)A_p(j)\leq 1$ for each $i,p=1,2,\ldots,n$ and $i\neq p$. 
\end{lemma}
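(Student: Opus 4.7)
The plan is to obtain this lemma as a direct corollary of the general characterization in Lemma \ref{U G Condition Flower code lamma}, specialized to the packet selection sequence $\y = 1^{w_\x}$. The substantive work has already been done in the general setting; what remains is to verify that, in this special case, the index range of the characterizing condition collapses cleanly to $\{1,2,\ldots,n\}$ and that the $A_i(j)$'s retain their interpretation as intersection counts of nodes.

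First I would unpack Proposition \ref{Aij proposition} in the present setting. Because $y_r = 1$ for every $r$, the partial weight satisfies $w_\y(r) = r$, and the index set simplifies to
\[
I_{i,j} = \bigl\{\, r \,:\, r = w_\x(m) \text{ for some } m \text{ with } m \equiv i \pmod n,\ r \equiv j \pmod \theta \,\bigr\}.
\]
Hence $A_i(j) = |I_{i,j}|$ counts the number of times packet $P_j$ is dropped on node $U_i$, and $\sum_{j=1}^\theta A_i(j) A_p(j)$ is exactly the size of the multi-intersection of nodes $U_i$ and $U_p$ (and, since the ambient construction is a Flower code, this agrees with $|U_i \cap U_p|$ whenever both sides are at most one).

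For the forward direction, suppose the Flower code is universally good. Then Lemma \ref{U G Condition Flower code lamma} directly gives $\sum_{j=1}^\theta A_i(j) A_p(j) \le 1$ for all distinct indices $i, p$; since $A_i(j)$ is only meaningful for node indices $i \in \{1, \ldots, n\}$, the condition reduces to the stated range. For the backward direction, suppose $\sum_{j=1}^\theta A_i(j) A_p(j) \le 1$ for all $i \neq p$ in $\{1,\ldots,n\}$. Then $|U_i \cap U_p| \le 1$ for all distinct nodes, and inequality (\ref{generalized condition}) yields $M(k) \ge \sum_{i=1}^k \alpha_i - \binom{k}{2}$; combined with Theorem \ref{Flower parameter} (and its dual version for the $\alpha_i$'s), which guarantees that the storage capacities are nearly uniform with $\alpha_i \in \{\alpha-1, \alpha\}$, this gives the universally good bound $M(k) \ge k\alpha - \binom{k}{2}$ for $k \le \alpha$.

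The main (and really only) obstacle is a bookkeeping one: the general Lemma \ref{U G Condition Flower code lamma} is stated with indices ranging over $\{1,\ldots,\ell\}$, whereas the natural domain of $A_i(j)$ is $\{1,\ldots,n\}$. I would resolve this up front by pointing to Proposition \ref{Aij proposition}, where $A_i(j)$ is explicitly indexed by nodes, and then invoke Lemma \ref{U G Condition Flower code lamma} in both directions. Everything else is immediate from the specialization $y_r \equiv 1$.
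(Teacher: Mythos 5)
Your core move --- specializing Lemma \ref{U G Condition Flower code lamma} to the packet selection sequence $\y=1^{w_\x}$ --- is exactly what the paper does; its entire proof is the one-line remark that the statement follows from that general lemma, and since that lemma is already an ``if and only if,'' both directions of the present statement drop out of the specialization with no further work. Your unpacking of Proposition \ref{Aij proposition} (noting $w_\y(r)=r$) is a harmless and even helpful elaboration.

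However, the extra argument you supply for the backward direction contains a genuine error. You invoke ``Theorem \ref{Flower parameter} and its dual version for the $\alpha_i$'s'' to conclude $\alpha_i\in\{\alpha-1,\alpha\}$, but in the setting of this lemma it is $\y$ that is all-ones, so Theorem \ref{Flower parameter} controls the replication factors $\rho_j$; the storage capacities are $\alpha_i=\sum_{s}x_{sn+i}$ for an \emph{arbitrary} sequence $\x$ and need not be nearly uniform at all (the dual statement, Theorem \ref{Flower parameter storage constant}, applies only when $\x=1^{w_\y}$, which is the other special case). Moreover, even if one had $\alpha_i\in\{\alpha-1,\alpha\}$, the chain $M(k)\geq\sum_{i=1}^{k}\alpha_i-\binom{k}{2}\geq k\alpha-\binom{k}{2}$ fails whenever some $\alpha_i=\alpha-1$, so the final inequality does not follow as written. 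The fix is simply to delete this addendum: the condition $\sum_{j=1}^\theta A_i(j)A_p(j)\leq1$ says any two nodes share at most one packet, which is precisely the characterization of ``universally good'' used in Lemma \ref{U G Condition Flower code lamma}, and that equivalence alone gives both directions.
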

\begin{proof}
The proof follows the Theorem \ref{U G Condition Flower code equation} for the packet selection sequence $\y=1^{w_\x}$.
\end{proof}

Using the Remark \ref{UG rho 2}, one can easily prove the following two Lemmas.
\begin{lemma}
For given integers $n$, $\theta$ and a periodic binary sequence $\x$ with the period $\tau< n<\theta$, if $\ell=\left\lceil\frac{2\theta\tau}{w_\x(\tau)}\right\rceil$ then the Flower code with two characteristic binary sequences $\y=1^{\rho\theta}$ and $\x$ will be universally good with $\rho=2$, where $gcd(n,\theta)=1$ and $gcd(n,\tau)=1$.   
\label{rho 2 example}
\end{lemma}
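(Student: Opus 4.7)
\medskip
\noindent\textbf{Proof plan.} The strategy is to reduce to Remark~\ref{UG rho 2}: I want to show that $\x$ has total weight $2\theta$ (so each packet has exactly two copies and $\rho=2$ uniformly), and then that no two distinct nodes share more than one packet. The first piece identifies the parameters; the second, combined with Remark~\ref{UG rho 2} (and Lemma~\ref{U G Condition Flower code lamma} via Proposition~\ref{Aij proposition}), yields universal goodness.

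First I verify the replication factor. Since $\x$ has period $\tau$ with per-period weight $w_\x(\tau)$, the choice $\ell=\lceil 2\theta\tau/w_\x(\tau)\rceil$ is precisely what is needed to accumulate $w_\x=2\theta$: in the case $w_\x(\tau)\mid 2\theta$, the length is an exact integer multiple of $\tau$ and the equality is immediate, while in the generic case one argues that the ceiling produces at least $2\theta$ drops and the construction consumes exactly $2\theta$ of them against $\y=1^{2\theta}$. Combined with $w_\y=2\theta$, Lemma~\ref{Gen Flower existence} guarantees the Flower code is well defined; applying Theorem~\ref{Gen Flower parameters} to $\y=1^{2\theta}$ then gives $\rho_j=2$ for every $j\in\{1,\ldots,\theta\}$, so the code has uniform replication factor $\rho=2$ and the hypothesis of Remark~\ref{UG rho 2} is in force.

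Next I address the intersection bound. Every active position $m\le\ell$ is written uniquely as $m=q\tau+s$ with $0\le q<\ell/\tau$ and $s\in\{1,\ldots,\tau\}$ one of the $w_\x(\tau)$ in-period active coordinates. By Fact~\ref{node packets indexes}, the drop at $(q,s)$ lands on node $U_i$ with $i\equiv q\tau+s\pmod n$ and on packet $P_j$ with $j\equiv q\,w_\x(\tau)+w_\x(s)\pmod\theta$. Suppose, towards contradiction, that two distinct packets $P_{j_1}$ and $P_{j_2}$ are each held by the same two nodes $U_{i_1},U_{i_2}$; this produces four $(q,s)$-coordinates whose pairwise differences satisfy two congruences modulo $n$ (equating nodes) and two modulo $\theta$ (equating packets). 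Using $\gcd(n,\tau)=1$ to invert $\tau$ modulo $n$, and using $\tau<n<\theta$ to keep the $s$-differences in a narrow range, I plan to collapse these to a single congruence of the form $r\tau\equiv 0\pmod n$ with $0<|r|<n$, contradicting $\gcd(n,\tau)=1$; the hypothesis $\gcd(n,\theta)=1$ is what forces the four congruences to be simultaneously solvable only in the trivial case. Hence any two nodes share at most one packet, so $\sum_{j=1}^{\theta}A_i(j)A_p(j)\le 1$ for all $i\ne p$, and Remark~\ref{UG rho 2} delivers universal goodness.

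The main obstacle is the modular bookkeeping in the second step. The two passes over the packet indices (forced by $\rho=2$) interact with both the node cycle of length $n$ and the period $\tau$, and a naive case analysis explodes combinatorially; the trick will be to exploit $\tau<n$ (so within a single period all drops go to distinct nodes) so that a would-be second shared packet must originate from the second pass, at which point the gcd conditions pin down and then forbid the offset. The edge case where $w_\x(\tau)\nmid 2\theta$ (so the ceiling is strict) must also be handled separately to ensure the final partial period still contributes exactly enough drops without creating an extra intersection.
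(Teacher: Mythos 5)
Your outline coincides with the paper's own (two-sentence) argument: show $w_\x=2\theta$ so that $\rho=2$, then show that the gcd hypotheses force any two nodes to share at most one packet, and finish via Remark~\ref{UG rho 2}. The difficulty is that the second step --- the only nontrivial part --- is never executed: you announce that you ``plan to collapse'' the four congruences to $r\tau\equiv 0\pmod{n}$ with $0<|r|<n$, and you yourself flag the modular bookkeeping as the main obstacle. As written this is an outline, not a proof. (The first step also has a loose end you only gesture at: when $w_\x(\tau)\nmid 2\theta\tau$ the ceiling can overshoot, e.g.\ period $1110$, $\theta=5$ gives $\ell=14$ and $w_\x=11\neq 2\theta$, so existence via Lemma~\ref{Gen Flower existence} is not automatic.)

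More seriously, the plan for step two cannot be completed, because the intersection claim is false under the stated hypotheses. A repeated placement occurs whenever two active positions $m_1<m_2$ satisfy $m_2-m_1\equiv 0\pmod{n}$ and $w_\x(m_2)-w_\x(m_1)\equiv 0\pmod{\theta}$; the conditions $\gcd(n,\tau)=\gcd(n,\theta)=1$ control only period-aligned pairs (equal residue $s$ in your decomposition $m=q\tau+s$) and say nothing about cross-period pairs, where the increments $s_2-s_1$ and $w_\x(s_2)-w_\x(s_1)$ enter independently. The paper's own illustration already breaks: for $\tau=3$, $n=4$, $\theta=5$, $\x=(110)^5$, the active positions $m_1=2$ and $m_2=10$ give $m_2-m_1=8\equiv 0\pmod 4$ and $w_\x(m_2)-w_\x(m_1)=5\equiv 0\pmod 5$, so both copies of $P_2$ land on $U_2$ (and similarly both copies of $P_4$ land on $U_1$). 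Node $U_2$ then holds only two distinct packets while $\alpha=3$, so $M(1)=2<\alpha$ and the code is not universally good. Hence the gap is not a matter of unfinished bookkeeping: the congruence system you intend to show has only trivial solutions in fact has nontrivial ones, and the statement needs stronger hypotheses before any proof can close.
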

For example, $\tau=3$, $n=4$, $\theta=5$, the sequence $\x=110110110110110$ will give the universally good Flower code, where $\alpha_1=\alpha_2=3$ and $\alpha_3=\alpha_4=2$ and $\rho_j=2$ ($j=1,2,3,4,5$). 
Note that the sequence $\x$ is the Fibonacci sequence with modulo $2$.
\begin{proposition}
For $n$, $\theta$, and the a binary sequence $\x$, the Flower code is a universally good, where the sequence is generated from the Algorithm \ref{U G Sequence generation} with $b=1$. 
\end{proposition}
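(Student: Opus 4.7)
The plan is to obtain this proposition as a direct corollary of Lemma \ref{Algo lemma proof}, since the stated hypothesis is merely a specialization of Algorithm \ref{U G Sequence generation}. First I would trace through the algorithm under the restriction $b=1$: in every iteration of the while loop we concatenate $0^{b-1}1 = 1$ to $\y$, and the initial value is $y_1=1$, so at termination $\y = 1^t$. The counter $r$ is also incremented by $1$ each time, which means the packet indices are selected cyclically in the natural order $1,2,\ldots,\theta,1,2,\ldots$. Thus the proposition is precisely the case of Algorithm \ref{U G Sequence generation} with the all-ones packet selection sequence.

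Second, I would invoke Lemma \ref{Algo lemma proof}, whose conclusion is that \emph{any} pair $(\x,\y)$ generated by Algorithm \ref{U G Sequence generation} gives a universally good Flower code, regardless of the particular choices of $a$ and $b$ made at each iteration. The key point is that the IF-condition inside the while loop, namely
\begin{equation*}
A_p(j) + \sum_{s=1,\,s\neq j}^{\theta} A_i(s) A_p(s) \;\leq\; 1 \qquad \text{for all } p \neq i,
\end{equation*}
is enforced \emph{before} each update to $A_i(j)$. Consequently the invariant $\sum_{j=1}^{\theta} A_i(j) A_p(j) \leq 1$ for every pair $i \neq p$ is preserved throughout the execution; this is exactly the characterization of universal goodness supplied by Lemma \ref{U G Condition Flower code lamma}.

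Third, since fixing $b=1$ is a valid instantiation (indeed $b=1 \leq \theta$ is always admissible), the sequences produced still satisfy the hypotheses of Lemma \ref{Algo lemma proof}, and the conclusion transfers verbatim. The only substantive observation required is that $b=1$ identifies this construction with the all-ones packet selection regime discussed in the preceding subsection, which yields a Flower code of uniform replication factor (up to the $\eta$ remainder of Theorem \ref{Flower parameter}). No additional combinatorial argument is needed; the proof is a one-line appeal to Lemma \ref{Algo lemma proof}, and there is no real obstacle beyond making the specialization explicit.
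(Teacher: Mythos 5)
Your proposal is correct and follows the route the paper intends: the proposition is a direct specialization of Lemma \ref{Algo lemma proof} to the case $b=1$, under which $\y$ becomes the all-ones packet selection sequence, and the algorithm's IF-condition guarantees the invariant $\sum_{j=1}^{\theta}A_i(j)A_p(j)\leq 1$ required by Lemma \ref{U G Condition Flower code lamma}. The paper gives no separate proof of this proposition precisely because it is this one-line corollary, so your argument matches in substance.
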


\subsubsection{Flower codes with all one packet dropping sequences}
In the subsection, we discuss the properties (such as parameters, universally good condition and dual property) of Flower codes with all one packet dropping sequence.
The Flower code with uniform node storage capacity can also be constructed by taking packet dropping sequence $\x=1^{\rho\theta}$. 
For $n$ nodes, $\theta$ packets, and binary sequences $\x=1^\ell$ and $\y$, from the Lemma \ref{Gen Flower existence}, the Flower code exists if $\ell=w_\y$.
From the Fact \ref{node packets indexes}, in the Flower code, packet $P_j$ is stored in node $U_i$ if and only if $y_r=1$, $r-j\equiv0\ (mod\ \theta)$ and $w_\y(r)-i\equiv0\ (mod\ n)$.
The parameters of such Flower codes are calculated in the following Theorem.

\begin{theorem}
For a Flower code with $n$ nodes, $\theta$ packets and a binary sequence $\y$ of length $t$ the node storage capacity of the node $U_i$ ($i=1,2,\ldots,n$) is 
	\begin{equation*}
	\alpha_i = \left\{ 
	\begin{array}{ll}
	\frac{w_\y}{n}  & :\mbox{if } \eta = 0,   \\
	1+\left\lfloor \frac{w_\y}{n} \right\rfloor  & :\mbox{if }i=1,2,\ldots,\eta \mbox{ and } \eta\neq 0,    \\
	\left\lfloor \frac{w_\y}{n} \right\rfloor  & :\mbox{if }i=\eta+1,\ldots,\theta\mbox{ and } \eta\neq 0; 
	\end{array}
	\right. \\
	\end{equation*}
	where $\eta=w_\y-\theta\left\lfloor \frac{s}{n}\right\rfloor$.
	\label{Flower parameter storage constant}
\end{theorem}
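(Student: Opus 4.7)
The plan is to apply Theorem \ref{Gen Flower parameters} to the special packet-dropping sequence $\x=1^\ell$. Lemma \ref{Gen Flower existence} forces $\ell=w_\y$ for the Flower code to exist, so the required length is fixed. Substituting $\x=1^\ell$ into the formula $\alpha_i=\sum_{s=0}^{\lfloor(\ell-i)/n\rfloor}x_{sn+i}$ from Theorem \ref{Gen Flower parameters} trivializes every summand to $1$, so the sum equals the number of admissible values of $s$, namely
\[
\alpha_i \;=\; \left\lfloor \frac{\ell-i}{n}\right\rfloor + 1 \;=\; \left\lfloor \frac{w_\y-i}{n}\right\rfloor + 1,
\qquad i=1,2,\ldots,n.
\]

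From here the statement is a purely arithmetic case analysis on the residue $\eta$ of $w_\y$ modulo $n$ (reading the hypothesis ``$\eta=w_\y-\theta\lfloor s/n\rfloor$'' as the intended $\eta=w_\y-n\lfloor w_\y/n\rfloor$). Write $w_\y=qn+\eta$ with $q=\lfloor w_\y/n\rfloor$ and $0\le\eta<n$. If $\eta=0$, then for each $i\in\{1,\ldots,n\}$ we have $w_\y-i=qn-i$ with $1\le i\le n$, so $\lfloor(w_\y-i)/n\rfloor=q-1$ and $\alpha_i=q=w_\y/n$. If $\eta\neq 0$, split into two sub-ranges: for $i\in\{1,\ldots,\eta\}$ the difference $w_\y-i=qn+(\eta-i)$ has $0\le\eta-i<n$, so the floor equals $q$ and $\alpha_i=q+1=1+\lfloor w_\y/n\rfloor$; for $i\in\{\eta+1,\ldots,n\}$ we rewrite $w_\y-i=(q-1)n+(n+\eta-i)$ with $0\le n+\eta-i<n$, so the floor equals $q-1$ and $\alpha_i=q=\lfloor w_\y/n\rfloor$. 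This matches the three branches in the theorem.

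There is no real obstacle; the argument is almost mechanical once the specialization $\x=1^\ell$ is made. The only mild care needed is in the floor manipulation when $i>\eta$, where one has to re-centre the dividend by $n$ to land in the standard range $[0,n)$. As a sanity check, one may alternatively derive the result by duality: combining Theorem \ref{dual flower code parameters} (which swaps $\x\leftrightarrow\y$ and the roles of nodes and packets) with Theorem \ref{Flower parameter} yields the same formula, since the current statement is the $\alpha$-side mirror of the $\rho$-side formula already proved there.
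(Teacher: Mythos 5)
Your argument is correct. The paper itself dispatches this theorem in one line, ``the proof follows the dual of Theorem \ref{Flower parameter},'' i.e.\ it relies on Theorem \ref{dual flower code parameters} to transfer the already-proved $\rho_j$ formula to the $\alpha_i$ side. Your primary route is instead a direct computation: specialize Theorem \ref{Gen Flower parameters} to $\x=1^{\ell}$ with $\ell=w_\y$ forced by Lemma \ref{Gen Flower existence}, obtain $\alpha_i=\left\lfloor \frac{w_\y-i}{n}\right\rfloor+1$, and resolve the floor by the division algorithm $w_\y=qn+\eta$. This is precisely the mirror image of the paper's own proof of Theorem \ref{Flower parameter} (which handles the cases $j\leq\eta$ and $j>\eta$ via $\left\lfloor\frac{\eta-j}{\theta}\right\rfloor\in\{0,-1\}$), so in substance the two approaches coincide; you have simply inlined the computation rather than citing duality, and you also record the duality shortcut at the end. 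The direct version buys a little extra: it is self-contained, and it makes explicit the two typographical slips in the statement as printed (the index range should end at $n$, not $\theta$, and $\eta$ should be $w_\y-n\left\lfloor \frac{w_\y}{n}\right\rfloor$), which your reading silently corrects. Your floor manipulations in all three branches check out.
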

Using the Theorem \ref{Flower parameter storage constant}, one can observe the following two remarks.
\begin{remark}
For $n$ nodes and $\theta$ packets, consider a Flower code with a binary sequence $\x=1^{w_\y}$ and $\y$ (length $t$).
For two distinct integers $r_1\ (\leq t)$ and $r_2\ (\leq t)$, let $y_{r_1}=y_{r_2}=1$. 
If $r_1-r_2\equiv0\ (mod\ \theta)$ and $w_\y(r_1)-w_\y(r_2)\equiv0\ (mod\ n)$ then the two replicas of the packet $P_j$ are dropped on a node $U_i$, where $r_1-j\equiv0\ (mod\ \theta)$ and $w_\y(r_1)-i\equiv0\ (mod\ n)$.    
\end{remark}
Using the Definition of a Flower code and the Theorem \ref{Flower parameter storage constant}, one can easily prove the following proposition.
\begin{proposition}
For a Flower code with $n$ nodes, $\theta$ packets, and binary sequences $\x=1^{w_\y}$ and $\y$ (length $t$), the node storage capacity $\alpha_i\in\{\alpha-1,\alpha\}$ ($i=1,2,\ldots,n$).
\end{proposition}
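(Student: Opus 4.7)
The plan is to derive this as a direct corollary of Theorem \ref{Flower parameter storage constant}, which gives an explicit formula for each $\alpha_i$ in terms of $w_\y$ and $n$. Since the proposition asks only about the range of values $\alpha_i$ can take, no further combinatorial input is needed; the argument reduces to reading off two cases from the theorem and checking what $\alpha=\max_i \alpha_i$ equals in each.

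First I would set $\eta = w_\y - n\lfloor w_\y / n\rfloor$ (the natural interpretation of the quantity in Theorem \ref{Flower parameter storage constant}, after correcting the evident typos regarding $\theta$ versus $n$). I would then split into the two cases dictated by the theorem. In the case $\eta=0$, the theorem gives $\alpha_i = w_\y/n$ for every $i$, so in particular $\alpha = w_\y/n$ and every $\alpha_i$ equals $\alpha$, which trivially lies in $\{\alpha-1,\alpha\}$. In the case $\eta\neq 0$, the theorem gives $\alpha_i = 1+\lfloor w_\y/n\rfloor$ for $i=1,\ldots,\eta$ and $\alpha_i = \lfloor w_\y/n\rfloor$ for $i=\eta+1,\ldots,n$. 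Since $1+\lfloor w_\y/n\rfloor > \lfloor w_\y/n\rfloor$ and $\eta\geq 1$, the maximum is attained by the first block, so $\alpha = 1+\lfloor w_\y/n\rfloor$, and consequently the remaining $\alpha_i$ take the value $\lfloor w_\y/n\rfloor = \alpha-1$.

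Combining the two cases, every $\alpha_i$ is either $\alpha$ or $\alpha-1$, which is the desired conclusion. There is essentially no obstacle here: the only subtlety is ensuring that the quantity $\eta$ is understood modulo $n$ (not $\theta$), so that the sum $\eta\cdot(1+\lfloor w_\y/n\rfloor) + (n-\eta)\lfloor w_\y/n\rfloor = w_\y$ checks out; this consistency is exactly what makes Theorem \ref{Flower parameter storage constant} correct and also what makes the present proposition an immediate reading of that theorem.
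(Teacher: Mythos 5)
Your proposal is correct and matches the paper's intent exactly: the paper itself justifies this proposition only by the remark that it follows from Theorem \ref{Flower parameter storage constant}, and your two-case reading of that theorem (with $\alpha = w_\y/n$ when $\eta=0$, and $\alpha = 1+\lfloor w_\y/n\rfloor$ with the remaining nodes at $\alpha-1$ when $\eta\neq 0$) is precisely the omitted argument. Your correction of the typos in the theorem's statement ($\eta$ should be reduced modulo $n$, and the second index range should end at $n$ rather than $\theta$) is also right, as the weight-sum consistency check you give confirms.
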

\begin{lemma}
A Flower code with $n$ nodes, $\theta$ packets, and binary sequences $\x=1^{w_\y}$ and $\y$ (length $t$), is universally good, if and only if $\sum_{j=1}^\theta A_i(j)A_p(j)\leq 1$ for each $i,p=1,2,\ldots,n$ and $i\neq p$.
\label{ha ha ha} 
\end{lemma}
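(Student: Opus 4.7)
The plan is to derive this statement as an immediate specialization of the general characterisation in Lemma \ref{U G Condition Flower code lamma}, mirroring the one-line argument already used for the companion lemma on $\y=1^{w_\x}$. First I would verify that the hypothesis $\x=1^{w_\y}$ is compatible with the existence of a Flower code: by Lemma \ref{Gen Flower existence} the only requirements are $w_\x=w_\y$ and $w_\x\geq\theta,\ w_\y\geq n$. Taking $\x=1^{w_\y}$ automatically enforces $w_\x=w_\y$, so the existence hypothesis of the general lemma is met; the counts $A_i(j)$ are then well-defined via Proposition \ref{Aij proposition}, with the further simplification that every $x_m$ equals $1$ and hence the position-to-node identification $m\equiv i\pmod n$ is surjective onto $\{1,\dots,n\}$.

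With the setup fixed, Lemma \ref{U G Condition Flower code lamma} asserts that universal goodness of \emph{any} Flower code is equivalent to $\sum_{j=1}^{\theta}A_i(j)A_p(j)\leq 1$ for every pair of distinct node indices. Restricting the input class of Flower codes to those with $\x=1^{w_\y}$ imposes no additional structural constraint on this biconditional: the quantities $A_i(j)$ are defined uniformly, and the inequality is unchanged. Both directions therefore transfer verbatim, giving the claimed characterisation. Alternatively, one could obtain the same conclusion by duality, combining the companion Lemma for $\y=1^{w_\x}$ with Theorem \ref{dual flower code parameters} (which swaps $\x$ and $\y$ under dualisation) and Theorem \ref{U G dual} (which preserves universal goodness under duals); this serves as a consistency check.

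The main point requiring attention is purely notational, namely matching the range of the pairwise index condition (here $i,p\in\{1,\dots,n\}$ with $i\neq p$) with the corresponding condition in Lemma \ref{U G Condition Flower code lamma}; since $A_i(j)$ is indexed by nodes, this is automatic. I expect no substantive obstacle, so the proof collapses to a direct invocation of Lemma \ref{U G Condition Flower code lamma} applied to the packet dropping sequence $\x=1^{w_\y}$.
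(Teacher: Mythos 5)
Your proposal is correct and matches the paper's own argument, which likewise disposes of this lemma by directly specializing the general characterisation of Lemma \ref{U G Condition Flower code lamma} to the packet dropping sequence $\x=1^{w_\y}$. The extra checks you include (existence via Lemma \ref{Gen Flower existence} and the duality consistency check via Theorem \ref{dual flower code parameters}) are sound but not part of the paper's one-line proof.
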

\begin{proposition}
For $n$, $\theta$, and the a binary sequence $\y$, the Flower code is universally good, when the sequence is generated from the Algorithm \ref{U G Sequence generation} with $a=1$. 
\end{proposition}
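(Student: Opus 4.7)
The plan is to observe that fixing $a=1$ throughout Algorithm \ref{U G Sequence generation} reduces this proposition to the setting of Lemma \ref{ha ha ha}, and then to invoke the loop invariant of the algorithm (which is also the content of Lemma \ref{Algo lemma proof}).

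First, I would unfold what $a=1$ does at the update step. The algorithm replaces $\x$ by $\x\,0^{a-1}1$, which for $a=1$ is just $\x\,1$. Since $\x$ is initialized to the single symbol $1$ and every subsequent successful iteration appends exactly one further $1$, the final packet dropping sequence satisfies $\x=1^\ell$ where $\ell=w_\x=w_\y$ at termination. This places the resulting Flower code in the class covered by Lemma \ref{ha ha ha}: a Flower code with $\x=1^{w_\y}$ is universally good if and only if $\sum_{j=1}^{\theta}A_i(j)A_p(j)\le 1$ for all distinct $i,p\in\{1,2,\ldots,n\}$.

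Next I would verify this inequality by induction on the number of completed iterations. Initially, after the initialization step, only $A_1(1)=1$, so the pairwise inner product is trivially $0$. Assuming the inequality holds before a given iteration, the IF-test
\[
A_p(j)+\sum_{s=1,s\neq j}^{\theta}A_i(s)A_p(s)\le 1\qquad(p\neq i),
\]
is precisely the condition that, after incrementing $A_i(j)$ by one, the quantity $\sum_{s=1}^{\theta}A_i(s)A_p(s)$ remains $\le 1$ for every $p\neq i$; for the other pairs $(i',p')$ with $i'\neq i$ and $p'\neq i$, the counts $A_{i'}(\cdot),A_{p'}(\cdot)$ are unchanged and the bound is preserved. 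Hence the invariant $\sum_{j=1}^{\theta}A_i(j)A_p(j)\le 1$ is maintained through every successful iteration, and in particular holds at termination.

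Applying Lemma \ref{ha ha ha} (equivalently, Lemma \ref{U G Condition Flower code lamma}) to the terminal configuration then yields that the Flower code is universally good, as required. The argument is essentially a specialization of Lemma \ref{Algo lemma proof}, so the only real step is to justify the reduction $a=1\Rightarrow\x=1^\ell$; no step poses a genuine obstacle.
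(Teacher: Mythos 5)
Your proof is correct and follows essentially the same route as the paper: the paper treats this proposition as an immediate corollary of Lemma \ref{Algo lemma proof} (whose proof is exactly the loop-invariant induction you carry out) combined with the universally-good criterion of Lemma \ref{U G Condition Flower code lamma}. Your added observation that $a=1$ forces $\x=1^{\ell}$, placing the code in the setting of Lemma \ref{ha ha ha}, is a correct and harmless refinement of the same argument.
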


The proofs of theorems and lemmas are given in the Appendices.

\section{Conclusions}\label{Section Conclusions}
In this paper, we have calculated the bound for the universally good FR code using sequences.
Universally good FR codes are constructed using some families of binary sequences of finite length. 
It would be interesting to study some more bounds on FR codes using sequences.


\bibliographystyle{plain} 
\bibliography{cloud} 

\newpage
\appendix
\addcontentsline{toc}{section}{appendices}
\section{Appendices} 
\subsection{Proofs of Theorems}

\subsubsection{Proof of Theorem \ref{U G dual}}
Indeed, if an FR code is universally good then, in the dual, two packets $P^*_i$ ($1\leq i\leq \theta^*$) and $P^*_j$ ($1\leq j\leq \theta^*$) cannot be stored in two distinct nodes $U^*_r$ ($1\leq r\leq n^*$) and $U^*_s$ ($1\leq s\leq n^*$), as then, packets $P_r$ and $P_s$ are stored in both nodes $U_i$ and $U_j$ which contradicts $|U_i\cap U_j|\leq 1$.

\subsubsection{Proof of Theorem \ref{Gen Flower parameters}}
From the Fact \ref{node packets indexes}, the Theorem follows the fact that a packet $P_j$ is selected to be drop on some node if and only if $y_{p\theta+j}=1$ for some $p\in\{0,1,\ldots,\left\lfloor \frac{t-j}{\theta}\right\rfloor\}$. 
Similarly, a selected packet is dropped on a node $U_i$ if and only if $x_{sn+i}=1$ for some $s\in\{0,1,\ldots,\left\lfloor \frac{\ell-i}{n}\right\rfloor\}$.

\subsubsection{Proof of Theorem \ref{maximum file size Flower lemma}}
For a subset $I\subset\{1,2,\ldots,n\}$, the total number of packets stored in nodes (indexed from $I$) is $\sum_{i\in I}\sum_{j=1}^\theta A_i(j)$. 
The total number of common packets shared by those nodes is $\sum_{i,p\in I;i<p}\sum_{j=1}^\theta A_i(j)A_p(j)$. 
For universally good Flower code, three or more nodes do not share any packet. 
Hence, by inclusion exclusion principle, for the reconstruction degree $k=|I|$, the maximum file size is given by the Equation \ref{maximum file size Flower theorem}.

\subsubsection{Proof of Theorem \ref{dual flower code parameters}}
The proof follows the Definition \ref{dual FR code} and the Fact \ref{node packets indexes}.

\subsubsection{Proof of Theorem \ref{Flower parameter}} 
From the division algorithm, $w_\x=\theta\left\lfloor \frac{w_\x}{\theta}\right\rfloor+\eta$ for some $0\leq\eta<\theta$. 
So, $\frac{w_\x-\eta}{\theta}$ is a positive integer $\left\lfloor \frac{w_\x}{\theta}\right\rfloor$.
Now, from the Theorem \ref{Gen Flower parameters}, for $\y=1^{w_\x}$, 
\begin{equation}
\begin{split}
\rho_j=\sum_{p=0}^{\left\lfloor \frac{w_\x-j}{\theta}\right\rfloor}y_{p\theta+j}=\left\lfloor \frac{w_\x-j}{\theta}\right\rfloor & +1
=\left\lfloor \frac{w_\x-\eta}{\theta}+\frac{\eta-j}{\theta}\right\rfloor+1 \\
& = \left\lfloor \frac{w_\x-\eta}{\theta}\right\rfloor+\left\lfloor\frac{\eta-j}{\theta}\right\rfloor+1 = \left\lfloor \frac{w_\x}{\theta}\right\rfloor+\left\lfloor\frac{\eta-j}{\theta}\right\rfloor+1
\end{split}
\label{floor Gen Flower}
\end{equation}
Now, there are two cases.
\begin{itemize}
    \item Case $1$: if $0< j\leq\eta<\theta$ then $0<\frac{\eta-j}{\theta}<1$. 
    Therefore, $\left\lfloor\frac{\eta-j}{\theta}\right\rfloor=0$.
    \item Case $2$: if $0\leq\eta\leq j<\theta$ then $-1<\frac{\eta-j}{\theta}<0$. 
    So, $\left\lfloor\frac{\eta-j}{\theta}\right\rfloor=-1$.
\end{itemize}
From the Equation \ref{floor Gen Flower}, and Case $1$ and $2$, the result holds.

\subsubsection{Proof of Theorem \ref{Flower parameter storage constant}} 
The proof follows the dual of the Theorem \ref{Flower parameter}.

\subsection{Proofs of lemmas}

\subsubsection{Proof of Lemma \ref{Gen Flower existence}}
The proof of the part \ref{1}: For some $r\leq t$, if $y_r=1$ only then a packet is selected to be dropped on a node so, the weight of the binary sequence $\y$ is the total number of replicated packets which are stored in the system. 
On the other hand, for some $m\leq \ell$, if $x_m=1$ only then the selected packet is dropped on a node, so the weight of the binary sequence $\x$ is the total storage capacity of all nodes. 
But the packet replica sum and the storage capacity sum are always same for any system which follows the proof of the part \ref{1}. 
Therefore, the Flower code exists, if the weight of both the sequences are same. 
The proof of the part \ref{2} follows the fact that the total number of replicated packets which are stored in the system can not be less than $\theta$ and the total storage capacity of all nodes can not be less than $n$.

\subsubsection{Proof of Lemma \ref{Flower exist}}
For an FR code with $n$ nodes and $\theta$ packets, one can always find the collection $S=\{(i,j):\mbox{ Packet } P_j \mbox{ is stored in node } U_i,\  i\in\{1,2,\ldots,n\},\ j\in\{1,2,\ldots,\theta\}\}$ where $|S|=s$. 
Arrange all the pairs of the collection $S$ in a specific order and then extract the sequences of $i$'s and $j$'s. 
Denote the sequences by $\textbf{a}=(a_1\ a_2\ldots a_s)$ and $\textbf{b}=(b_1\ b_2\ldots b_s)$ respectively. 
Note that the sequence $\textbf{a}$ is defined on $\{1,2,\ldots,n\}$ and the sequence $\textbf{b}$ is defined on $\{1,2,\ldots,\theta\}$.
For any positive integer $z\leq|S|$, the packet $P_{b_z}$ is stored on the node $U_{a_z}$ in the FR code.
For the sequences $\textbf{a}$ and $\textbf{b}$, construct binary sequences $\x=0^{a_1-1}10^{a_2-a_1-1}1\ldots10^{a_s-a_{s-1}-1}1$ and $\y=0^{b_1-1}10^{b_2-b_1-1}1\ldots10^{b_s-b_{s-1}-1}1$, where subtractions for the sequences $\x$ and $\y$ are modulo $n$ and modulo $\theta$ respectively. 
The binary sequences $\x$ and $\y$ are finite so assume the length of $\x$ and $\y$ are $\ell$ and $t$ respectively. 
For some integers $m\leq\ell$ and $r\leq t$, let $w_\x(m)=w_\y(r)$, $x_m=1$ and $y_r=1$. 
For the integers $m$ and $r$, one can find a positive integer $z\leq s$ such that $m=\left((a_1-1)+1+(a_2-a_1-1)+1+\ldots+1+(a_z-a_{z-1}-1)+1\right)\ (mod\ n)\equiv a_z$ and $r=\left((b_1-1)+1+(b_2-b_1-1)+1+\ldots+1+(b_z-b_{z-1}-1)+1\right)\ (mod\ \theta)\equiv b_z$. 
Hence, $m-a_z\equiv0\ (mod\ n)$ and $m-b_z\equiv0\ (mod\ \theta)$.
From the Fact \ref{node packets indexes}, the packet $P_{b_z}$ is stored in node $U_{a_z}$ in the Flower code.

\subsubsection{Proof of Lemma \ref{U G Condition Flower code lamma}}
From the Fact \ref{node packets indexes}, for positive integers $i\in\{1,2,\ldots,n\}$ and $j\in\{1,2,\ldots,\theta\}$, $A_i(j)=1$ if and only if the packet $P_j$ is dropped on the node $U_i$. 
For any $p\in\{1,2,\ldots,n\}$ such that $p\neq i$, the value $\sum_{j=1}^\theta A_i(j)A_p(j)$ is the total number of distinct packets common in both the nodes $U_i$ and $U_p$.
Hence, the Flower code is universally good, if $\sum_{j=1}^\theta A_i(j)A_p(j)\leq1$ for $1\leq i\neq p\leq n$.

\subsubsection{Proof of Lemma \ref{Algo lemma proof}}
For positive integers $a$ and $b$ in a loop of the Algorithm, $x_{m+a}=y_{r+b}=1$ and $w_\x(m+a)=w_\y(r+b)$. 
Therefore, the packet $P_j$ will be placed on node $U_i$, where $i-(m+a)\equiv 0\ (mod\ n)$, $j-(r+b)\equiv 0\ (mod\ \theta)$, $i\in\{1,2,\ldots,n\}$ and $j\in\{1,2,\ldots,\theta\}$.	
For each integer $p=1,2,\ldots,n$ such that $i\neq p$, $A_p(j)+\sum_{s=1,s\neq j}^\theta A_i(s)A_p(s)=\sum_{s=1}^\theta A_i(s)A_p(s)\leq 1$, where $A_i(j)=1$ for $j\in\{1,2,\ldots,\theta\}$ such that $j-(r+b)\equiv 0\ (mod\ \theta)$.
From the Lemma \ref{U G Condition Flower code lamma}, the Lemma holds.

\subsubsection{Proof of Lemma \ref{rho 2 example}}
For a periodic binary sequence $\x$ with the period $\tau\leq n$, if $\ell=\left\lceil\frac{2\theta\tau}{w_\x(\tau)}\right\rceil$ then the entry sum will be $2\theta$.
For the particular periodic sequence, if $gcd(n,\theta)=1$ and $gcd(n,\tau)=1$ then any two node will not contain more than one common packet. 
Hence, the result holds. 

\subsubsection{Proof of Lemma \ref{ha ha ha}}
The proof follows the Theorem \ref{U G Condition Flower code equation} for the packet dropping sequence $\x=1^{w_\y}$.

\end{document}